\newtheorem{theorem}{Theorem}[section]
\newtheorem{lemma}[theorem]{Lemma}
\newtheorem{meta-theorem}[theorem]{Meta-Theorem}
\newtheorem{definition}[theorem]{Definition}
\crefname{theorem}{Theorem}{Theorems}
\crefname{proposition}{Proposition}{Propositions}
\crefname{observation}{Observation}{Observations}
\crefname{lemma}{Lemma}{Lemmas}
\crefname{claim}{Claim}{Claims}
\crefname{problem}{Problem}{Problems}
\crefname{conjecture}{Conjecture}{Conjectures}
\crefname{question}{Question}{Questions}
\crefname{example}{Example}{Examples}
\crefname{fact}{Fact}{Facts}
\definecolor{darkgreen}{rgb}{0,0.5,0}
\algnewcommand\algorithmicswitch{\textbf{switch}}
\algnewcommand\algorithmiccase{\textbf{case}}
\newcommand{\eps}{\varepsilon}
\newcommand{\ksi}{\xi}
\newcommand{\polylog}{\operatorname{polylog}}
\newcommand{\poly}{\operatorname{poly}}
\renewcommand{\phi}{\varphi}
\renewcommand{\paragraph}[1]{\vspace{0.15cm}\noindent {\bf #1}:}
\newcommand{\FullOrShort}{full}
  \newcommand{\fullOnly}[1]{#1}
  \newcommand{\shortOnly}[1]{}
    \newcommand{\fullOnly}[1]{}
    \newcommand{\IncludePictures}[1]{}
\title{Online Consistent $k$-center with Constant Worst-Case Recourse}
\title{Low-Recourse Online $k$-centers with Deletions }
\title{Online k-Center with Constant Recourse}
\title{Fully Dynamic Consistent $k$-Center Clustering}
\author{
  Jakub Łącki \\
  \small{Google}\\
  \small{jlacki@google.com}
  \and \textcircled{r}\footnote{The author ordering was randomized using \url{https://www.aeaweb.org/journals/policies/random-author-order/generator}. 
 It is requested that citations of this work list the authors separated by \texttt{\textbackslash textcircled\{r\}} instead of commas. } 
  \and
  Bernhard Haeupler\thanks{Supported in part by NSF grants CCF-1814603, CCF-1910588, NSF CAREER award CCF-1750808, a Sloan Research Fellowship, and the European Union's Horizon 2020 ERC grant 949272.} \\
  \small{ETH Zurich and CMU}\\
  \small{bernhard.haeupler@inf.ethz.ch}
  \and \textcircled{r} \and  
  Christoph Grunau\thanks{Supported by the European Research Council (ERC) under the European Unions Horizon 2020 research and innovation programme (grant agreement No.~853109).} \\
  \small{ETH Zurich}\\
  \small{cgrunau@inf.ethz.ch}
  \and
 Václav Rozhoň \footnotemark[3]\\ 
\small{ETH Zurich} \\ 
\small{rozhonv@ethz.ch}
  \and \textcircled{r} \and 
  Rajesh Jayaram \\
  \small{Google}\\
  \small{rkjayaram@google.com}
}
\date{July 13, 2023}
\begin{document}

\maketitle

\begin{abstract}

We study the \emph{consistent k-center clustering} problem. In this problem, the goal is to maintain a constant factor approximate $k$-center solution during a sequence of $n$ point insertions and deletions while minimizing the \emph{recourse}, i.e., the number of changes made to the set of centers after each point insertion or deletion. Previous works by Lattanzi and Vassilvitskii [ICML '12] and Fichtenberger, Lattanzi, Norouzi-Fard, and Svensson [SODA '21] showed that in the incremental setting, where deletions are not allowed, one can obtain $k \cdot \polylog(n) / n$ amortized recourse for both $k$-center and $k$-median, and demonstrated a matching lower bound. However, no algorithm for the fully dynamic setting achieves less than the trivial $O(k)$ changes per update, which can be obtained by simply reclustering the full dataset after every update.

In this work, we give the first algorithm for consistent $k$-center clustering for the fully dynamic setting, i.e., when both point insertions and deletions are allowed, and improves upon a trivial $O(k)$ recourse bound. Specifically, our algorithm maintains a constant factor approximate solution while ensuring worst-case \emph{constant} recourse per update, which is optimal in the fully dynamic setting. Moreover, our algorithm is deterministic and is therefore correct even if an adaptive adversary chooses the insertions and deletions.
\end{abstract}

\thispagestyle{empty}
\newpage
\parskip 7.2pt 
\pagenumbering{arabic}

\section{Introduction}

Clustering is a fundamental problem in computer science,
which arises in approximation algorithms, unsupervised learning, computational
geometry, spam and community detection, image segmentation,
databases, and other areas \cite{hansen1997cluster,shi2000normalized,arthur2006k,schaeffer2007graph,fortunato2010community,coates2012learning,tan2013data}. The goal of clustering is to find a structure in data by grouping
together similar data points. Specifically, for $k$-clustering objectives, the goal is to output a set of $k$ cluster \emph{centers} from the underlying metric space; the cost of the clustering is then evaluated by various functions of the distances of each point to their nearest center. Due to the importance of the problem, clustering has been extensively studied and many algorithms \cite{
arya2001local, kanungo2002local, jain2003greedy,  charikar2005improved, arthur2007k, ahmadian2019better, byrka2017improved}
and heuristics \cite{lloyd1982least} have been proposed to solve the offline version of the
problem, where the data is fixed and provided to the algorithm up front. 


Recently, due to the proliferation
of enormous datasets and the rise of modern computational paradigms where data is constantly changing, there has been significant interest in developing
\emph{dynamic} clustering algorithms~\cite{cohen2016diameter,lattanzi2017consistent,ChaFul18,DBLP:conf/esa/GoranciHL18,schmidt2019fully,goranci2019fully,HenzingerK20,henzinger2020dynamic,fichtenberger2021consistent}, which can maintain a clustering of a set of points $P$ subject to updates.
The two primary settings in which these problems are studied are the \emph{incremental} setting, where $P$ undergoes solely point insertions, and the \textit{fully dynamic} setting, where points can be both inserted to and deleted from $P$.
The fully dynamic setting is strictly more general and  captures a much wider class of applications. 

In this paper, we focus on dynamic \emph{consistent} algorithms, following a recent line of work~\cite{lattanzi2017consistent,cohen2019fully,jaghargh2019consistent,fichtenberger2021consistent}.
Specifically, a dynamic clustering algorithm is said to be \emph{consistent} if the number of changes it makes to its set of $k$ cluster centers is small. In particular, it should be smaller than the trivial $k$ changes obtained from independently reclustering the data after every step. The number of changes made to the cluster centers after each point insertion or deletion is known as the \emph{recourse}\footnote{To be more precise, we define the recourse as the number of swaps needed, that is, when $|C_1|=|C_2|=k$ and $|C_1 \triangle C_2| = 2$, we say that the recourse is one. }.
We note that a different commonly studied objective in the area of dynamic algorithms, including algorithms for clustering problems, is minimizing the running time of each insertion or deletion operation \cite{charikar2004incremental,ChaFul18,schmidt2019fully,henzinger2020fully,goranci2019fully,cohen2019fully, bateni2023optimal}.

Consistency is an important measure from both a theoretical and a practical point of view. From a theoretical perspective, establishing the optimal recourse achievable while maintaining a given approximation is a fundamental combinatorial question about how significantly updates to the data can change the set of all approximately optimal solutions. Problems for which low-recourse algorithms exist, therefore, must possess some smoothness in their set of approximate solutions. Consistency is also closely related, and in some settings equivalent, to the notion of low-recourse algorithms in the online algorithms literature, which has received significant attention in recent years for various problems \cite{gu2013power,gupta2014online,lkacki2015power,megow2012power,epstein2014robust,sanders2004online,gupta2014maintaining,bernstein2019online,gupta2017online,gupta2020fully,bhattacharya2023chasing}. Here, the typical restrictions of online algorithms are relaxed by allowing some of the past decisions of the algorithm to be changed. 

From a practical perspective, for many modern computational tasks, it is indeed possible to make changes to the solution maintained by the algorithm, rather than having decisions be totally irrevocable. For instance, load balancers must make decisions for which machine to place an item on the fly, however, data may later be reshuffled across machines, although this data shuffling is a costly operation. Furthermore, clustering is often employed as a preliminary step as part of a larger ML pipeline, such as for feature engineering and classification tasks; in such settings, changing the clustering can involve costly recomputations of the downstream pipeline. 

This notion of consistent clustering was first introduced by \cite{lattanzi2017consistent}, who demonstrated that in the \emph{incremental setting}, where points are inserted but not deleted, one can maintain a $O(1)$-approximate solution for commonly studied $k$-clustering objectives, including $k$-center, $k$-means, and $k$-median with $O((k^2 \polylog n) / n)$ amortized recourse,\footnote{For $k$-center, they showed that the classic ``doubling algorithm'' of \cite{charikar2004incremental} already achieved the tight $O((k \log n) / n)$ recourse bound.} and also proved a $\Omega((k \log n)/n)$ lower bound. Their results were later strengthened by \cite{fichtenberger2021consistent} to a $O((k \cdot \polylog n)/n)$ amortized recourse bound, which is tight up to $\polylog n$ factors in the incremental setting. Thus, for incremental clustering, it is possible to achieve a recourse that is significantly sublinear in the number of updates to the data. 

Observe that achieving subconstant recourse bounds (when $n \gg k$) relies heavily upon the fact that the input is incremental. Namely, in the fully dynamic setting, this is clearly no longer possible: simply by repeatedly inserting and deleting a highly significant point, any bounded-approximation algorithm will be forced to add and remove that point from its set of centers after every time step. Thus, for nearly all $k$-clustering objectives, $\Omega(1)$ is a lower bound for the \textit{amortized} recourse of any fully dynamic clustering algorithms. On the other hand, to date, there has been little success in obtaining better algorithms that come close to meeting this lower bound. In fact, for most $k$-clustering tasks, such as $k$-center, $k$-means, and $k$-median, the best-known algorithm is simply to fully recluster the data after every update, resulting in a trivial $O(k)$ amortized recourse. 

The aforementioned discrepancy between the upper and lower bounds for consistent fully dynamic clustering algorithms represents a strong gap in the literature. The lack of improved algorithms for the fully dynamic setting, despite the strong progress in the incremental setting, gives some indication that perhaps nothing better can be done than repeated reclustering of the data. This state of affairs motivates the following key question:


\vspace{0.5em}

\begin{mdframed}
\begin{quote}
 \begin{center}
  {\it  Are there consistent $k$-clustering algorithms with $o(k)$ recourse in the fully dynamic setting?}
 \end{center}
 \end{quote}
\end{mdframed}
\vspace{-0.5em}

Furthermore, the existing incremental consistent algorithms for $k$-clustering~\cite{charikar2004incremental,lattanzi2017consistent,fichtenberger2021consistent} all leverage the fact that the cost of the optimal solution can only increase over time.
Specifically, whenever the (approximate) cost of the solution increases by a constant factor, they recompute the solution from scratch.
Because of that, even though their amortized recourse can be subconstant, the \emph{worst-case} recourse can be as large as $\Omega(k)$.
This motivates the second question:

\vspace{0.5em}
\begin{mdframed}
\begin{quote}
 \begin{center}
  {\it  Are there consistent $k$-clustering algorithms with $o(k)$ \emph{worst-case} recourse?}
 \end{center}
 \end{quote}
\end{mdframed}
\vspace{-0.5em}

In this work, we provide an affirmative answer to both above questions for the classic $k$-centers objective \cite{hsu1979easy,gonzalez1985clustering,hochbaum1986unified,feder1988optimal,bern1997approximation}. Recall that, given a metric space $(\mathcal{X},d)$ and a set of points
$P\subseteq\mathcal{X}$, the $k$-center problem is to output
a set $C\subset\mathcal{X}$ of at most $k$ centers such that the
maximum distance of any point $p\in P$ to the nearest center $c\in C$
is minimized; in other words, the goal is to minimize $\max_{p\in P}d(p,C)$ where $d(p,C)=\min_{c\in C}d(p,c)$. In the offline setting, $k$-center admits several well-known greedy $2$-approximation algorithms \cite{gonzalez1985clustering,hochbaum1986unified}. Moreover, it is known to be NP-hard to approximate the objective to
within a factor of $(2-\epsilon)$ for any constant $\eps>0$~\cite{hsu1979easy}.
Even restricted to Euclidean space, it is still NP-hard to approximate beyond a factor of $1.822$ \cite{feder1988optimal,bern1997approximation}. Thus, the majority of the literature on dynamic $k$-center focuses on obtaining constant approximations.

Our main result is a fully dynamic algorithm that maintains a constant approximation with constant amortized recourse, thereby matching the aforementioned consistency and approximation lower bounds up to a constant. Specifically, we prove the following theorem:

\begin{restatable}{theorem}{main}
\label{thm:main}
There exists a fully dynamic deterministic algorithm that maintains a constant-approximate solution of the $k$-center problem and obtains worst-case recourse of at most $1$ for an insertion and $2$ for a deletion.
\end{restatable}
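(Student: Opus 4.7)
My plan is to maintain a hierarchical net structure $\{N_i\}_{i \in \mathbb{Z}}$ on the current point set $P$, where each $N_i \subseteq P$ is a greedy $2^i$-net: points of $N_i$ are at pairwise distance at least $2^i$, and every point of $P$ lies within $O(2^i)$ of some point of $N_i$. Let $i^\star = \min\{i : |N_i| \le k\}$ and output $C := N_{i^\star}$. A standard packing argument --- the $|N_{i^\star-1}| > k$ points are pairwise $> 2^{i^\star-1}$ apart, so by pigeonhole the optimum radius satisfies $r^\star \ge 2^{i^\star-2}$ --- shows that $C$ is a constant-factor approximation.

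For an insertion of a point $p$, the only possible change to any $N_i$ is the addition of $p$ itself, and only at levels where $p$ is at distance $\ge 2^i$ from the current $N_i$. In particular, $N_{i^\star}$ grows by at most the one point $p$, giving recourse $1$ --- unless this pushes $|N_{i^\star}|$ past $k$ and forces $i^\star$ to increment, which is the delicate case discussed below. For a deletion of $p \notin N_{i^\star}$, the output does not change. For a deletion of $p \in N_{i^\star}$, I would find a replacement center among the finer net $N_{i^\star-1}$ restricted to the points formerly ``dominated'' by $p$ (such a candidate must exist because $p$'s ``cell'' is nonempty and any point in it sits in $N_{i^\star-1}$ or can be promoted to it); one removal of $p$ plus one promotion gives the claimed recourse of $2$.

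The main obstacle I anticipate is delivering \emph{worst-case} rather than amortized recourse, because a single update can in principle shift $i^\star$, and naively the output would then need to jump from $N_{i^\star}$ to $N_{i^\star+1}$, changing $\Theta(k)$ centers at once. I would address this via two coupled ideas. First, apply a hysteresis rule: once the algorithm commits to a level, only trigger a migration when $|N_{i^\star}|$ exits a sufficiently wide interval (say $[k/c, ck]$ for a constant $c$), which guarantees a linear-in-$k$ number of updates between consecutive migrations. Second, \emph{spread} each migration across those updates by maintaining a pointer into an ordering of $N_{i^\star}$ and, on every subsequent update, additionally swapping out one old-level center for its coarser-level ``parent'' in $N_{i^\star+1}$. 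While the migration is in progress, the output is a hybrid of two adjacent net levels; the technical heart of the proof is to verify that such a hybrid remains a valid $O(1)$-approximate $k$-center and that its recourse per update stays bounded by an absolute constant. Showing that the ``swap one center at a time'' schedule is always faster than the worst possible drift of $|N_{i^\star}|$ is, I expect, where the most care is required, and it is the step where the specific constants in the bounds $1$ and $2$ have to be nailed down.
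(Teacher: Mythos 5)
Your hierarchical-net skeleton matches the ``basic algorithm'' the paper also starts from (their geometric rank $\ksi_G(p)$ is precisely the top net level containing $p$, and their approximation argument via separation/maximality is the same packing-plus-chaining argument you give). The gap is in how deletions are handled, and it is the central difficulty the paper is written to overcome. Two of your claims do not hold as stated. First, ``for a deletion of $p \notin N_{i^\star}$ the output does not change'' is false: deleting a point at a low level can trigger a cascade of promotions --- a point enters $N_i$, which frees another to enter $N_{i+1}$, and so on --- so a single deletion far below the output level can alter $N_{i^\star}$ by $\Theta(k)$ points. (In a general metric the number of promotions at a single level after one deletion is also unbounded, since the packing bound you would need is a doubling-dimension phenomenon.) Second, ``one removal of $p$ plus one promotion gives recourse $2$'' does not maintain any invariant: after the swap, $(N_{i^\star}\setminus\{p\})\cup\{q\}$ is generally neither a $2^{i^\star}$-packing nor a coarsening of $N_{i^\star-1}$, so your levels stop being nets and the approximation proof that relies on them no longer applies after a few updates. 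Your hysteresis/lazy-migration idea is then built on the unjustified premise that $|N_{i^\star}|$ drifts by $O(1)$ per update; once cascading is allowed, that premise fails and the amortization across $\Theta(k)$ ``quiet'' updates is not available.

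What is missing is a mechanism that decouples ``what we output'' from ``what the geometry forces the nets to be.'' The paper's resolution is to let the nets (geometric ranks) cascade freely --- so all the geometric invariants hold exactly at all times --- and to output according to a separate \emph{smooth} rank $\ksi_S$ that is only loosely coupled to $\ksi_G$ through a combinatorial object (the leveled forest, with $\ksi_G$ as an edge-disjoint path decomposition and $\ksi_S$ as a vertex-disjoint one). The crucial structural lemma making this work is that, among all points whose geometric rank is raised past a given level $r$ in a single deletion, at most one of them can ever later be raised to $r+2$ while the rest coexist; this is what lets the forest be built with at most one new node per level per update, and hence lets $\ksi_S$ change by at most one point per level per update, giving the worst-case recourse bound of $2$ without any hysteresis or spread-out migration. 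If you want to rescue your route, you would need an analogue of that ``no two double-increases'' lemma --- and once you have it, you are essentially reconstructing the leveled forest.
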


Our algorithm works by assigning each point a \emph{rank}, which is a non-negative integer with the following property. For any $k'$ and at any time, the $k'$ points of the highest rank form a constant approximate solution to the $k'$-center problem.
Hence, in a sense, it solves the $k$-center problem for all possible values of $k$ at the same time.

While maintaining such a rank function is a standard idea in the dynamic clustering literature, it turns out that in our setting it is quite complicated to maintain the rank function while achieving constant recourse. 
We end up solving this problem by maintaining \emph{two} rank functions, a \emph{geometrical} rank function whose properties are directly tied to the distances between points of $P$, and a \emph{smooth} rank function that aims to approximate the geometrical rank function while only changing slowly. Ultimately, the properties of the geometric rank function control the  approximation ratio of our algorithm, while the smooth rank function ensures that we achieve constant recourse. The interplay between the two rank functions is quite subtle and we capture it by a hierarchical forest structure that we call a \emph{leveled forest}. 

\paragraph{Roadmap}
The paper is structured as follows. In \cref{sec:overview} we informally overview our algorithm and explain why it achieves both constant approximation ratio and worst-case constant recourse. \cref{sec:formal} then contains a formal analysis of the algorithm.

\section{Technical Overview}
\label{sec:overview}

In this section, we describe the intuition behind our algorithm. First, in \cref{sec:o1} we describe a so-called basic algorithm, a natural hierarchical algorithm for the $k$-center problem that achieves a constant approximate ratio but can have an arbitrary recourse. In \cref{sec:o2} we describe how we can build a combinatorial structure that we call a leveled forest alongside a run of the basic algorithm. This combinatorial structure at some point of the algorithm captures the important aspects of what happened in the past. Lastly, in \cref{sec:o3} we explain how the leveled forest leads to a natural improvement over the basic algorithm that yields constant recourse. 
We use standard notation that is formally introduced later in \cref{sec:notation}. 

\subsection{Basic algorithm}
\label{sec:o1}

We start by describing a \emph{basic algorithm}, a simple online hierarchical algorithm for $k$-center that already achieves a constant approximation ratio but not constant recourse. This algorithm serves as the basis of our more complicated algorithm achieving constant recourse. The main idea behind this basic algorithm is standard in the context of dynamic algorithms for $k$-center: at every step, we assign every point $p$ in the current point set $P$ its \emph{rank} $\ksi(p)$, which is a non-negative integer between $0$ and $\lceil \log_2 \Delta \rceil$.
Here, $\Delta$ denotes the maximum distance between any two points in $P$ throughout the algorithm.
For the purpose of this technical overview, we assume that $\Delta$ is known upfront and that the minimum distance between two points in $P$ is at least $1$.
We want to maintain that the ranks induce maximal independent sets on all scales. In particular, for every $i$, we maintain that all points of rank at least $i$ have the following two properties:
\begin{enumerate}
    \item Separation: For every two points $p,q$ with $\ksi(p), \ksi(q) \ge r$, we have have $d(p,q) \ge 2^r$. 
    \item Maximality: For every point $p$ there exists $p'$ with $\ksi(p') > \ksi(p)$ and $d(p, p') < 2^{\ksi(p) + 1}$; alternatively $\ksi(p)$ is strictly larger than $\ksi(p')$ for all $p' \in P\setminus \{p\}$.
\end{enumerate}

\paragraph{Constant approximation}
Any rank function that satisfies both the separation and the maximality property has the following property: Any solution formed by $k$ points with the largest rank is $O(1)$-approximate. 
To see this, first let $r$ be the rank of $k+1$-th point with the largest rank. On one hand, notice that the cost of the optimal solution satisfies $OPT \ge 2^{r} / 2$ since the separation property implies that if we imagine drawing balls of radius strictly less than $2^r/2$ around the $k+1$ points with largest ranks, these balls are disjoint. Hence, any solution with $k$ centers has cost at least $2^r/2$. 

On the other hand, starting at any point $p$ with rank $\ksi(p) \le r$, we can construct a chain of points starting at $p$ and ending at some point $q$ with rank $\ksi(q) > r$ as follows: The point $p$ has a neighbor $p_1$ with rank $\ksi(p_1) > \ksi(p) $ such that $d(p, p_1) \le 2^{\ksi(p) + 1}$ by the maximality property. The point $p_1$ has another neighbor $p_2$ with rank $\ksi(p_2) > \ksi(p_1) $ such that $d(p_1, p_2) \le 2^{\ksi(p_1)+1}$ etc. Continuing until we reach the first point $q$ with $\ksi(q) > r$ and summing up a geometric sequence, we conclude that the cost of covering $p$ is $ O(2^r) = O(OPT)$. 

\paragraph{Maintaining the rank function}
Next, we describe a natural way of updating the rank function after the insertion/deletion of a point. This way does not quite achieve the constant recourse but it is the basis for our later, more complicated, algorithm. 

\emph{Insertions:} Whenever a new point $p$ is inserted into $P$, we simply assign it the highest possible rank $\ksi(p)$ such that we maintain the separation property. Notice that doing this maintains the maximality property. 

\emph{Deletions:} After a point is deleted from our solution, we iterate over all other remaining points in the point set (in arbitrary order) and increase the rank of each point as much as possible to maintain the separation property. Doing this also maintains the maximality property.

The issue with our basic algorithm is deletions: Deletion of one point can imply a change in rank for arbitrarily many points in the point set! As our plan is to keep the top $k$ points with the highest rank as our solution, the naive basic algorithm can have an arbitrarily high recourse.

\subsection{Leveled forest}
\label{sec:o2}

Before moving on to a better algorithm, we next explain the crucial property of the basic algorithm. While we noticed that after the deletion of a point, an arbitrary number of other points can have their ranks increased, there is a structural observation to be made. 

Fix a rank $r$ and assume we delete a point $p$.
Let $\ksi$ and $\ksi'$ be the rank function before and after the deletion, respectively.
Consider any two points $q_1, q_2$ such that $\ksi(q_1) \le r, \ksi(q_2) \le r$, $\ksi'(q_1) > r$ and $\ksi'(q_2) > r$.
We observe that the new ranks of $q_1$ and $q_2$ cannot both increase by at least $2$, i.e., $\ksi'(q_1), \ksi'(q_2) \ge r+2$ is not possible. To see this, first note that we have $d(p, q_i) < 2^{\ksi(q_i) + 1}$ for $i = 1,2$ since $p$ must have been the point that prevented the ranks of $q_i$ from being increased before the deletion.
By triangle inequality, we thus have $d(q_1, q_2) < 2^{\ksi(q_1) + 1} + 2^{\ksi(q_2) + 1} \le 2^{r + 2}$. Hence if it were indeed true that $\ksi'(q_1), \ksi'(q_2) \ge r + 2$, the distance between $q_1$ and $q_2$ would contradict the separation property of the new rank function $\ksi'$. We can even strengthen this observation: out of all points that increased their rank to at least $r+1$ in the same step, at any later point in time, at most one of them can have a rank at least $r+2$.

A clean way to understand this observation is by drawing a diagram that we called a leveled forest (see \cref{fig:forest}). This is a forest such that its leaves correspond to the current points in $P$ and are at level $0$ (bottom nodes of each subpicture of \cref{fig:forest}). We additionally maintain a decomposition of the forest into edge-disjoint paths going up from the leaves such that the height of each path is equal to the rank of the respective point.

The forest is constructed during the execution of the algorithm as follows. Whenever a new point is inserted and gets rank $r$, we add a new leaf at level $0$ and then add a path of length $r$ from that leaf up. 

Deletion of a point is more complicated and explained in \cref{fig:forest}. 
Whenever a point is deleted, we first delete all edges from the corresponding upwards path from the forest; we also delete all vertices from the tree whose degree becomes zero (see the left and the middle picture of \cref{fig:forest} for the forest before and after this operation). 
Next, recall that the ranks of some points are increased to preserve the maximality property of the rank function. We split all the increase operations into batches where $r$-th batch corresponds to a group of points whose rank increases from $r$ to $r+1$. Starting at $r = 0$, we represent each $r$-th batch operation in the leveled forest by adding one additional node to level $r+1$ of the leveled forest and connect it with the rank $r$ nodes corresponding to the points in that batch (see the right picture in \cref{fig:forest}, thick edges correspond to four different points increasing their rank by one). To be more precise, we perform these updates of the leveled forest sequentially starting at $r = 0$; this is because one point can be in several batches if its rank increases by more than one.

We observe that this way of building the forest allows us to keep a natural decomposition of its edges into paths going upwards from the leaves, such that the length of each path equals the rank of the corresponding point. 

Importantly, our observation from this section implies that the picture from \cref{fig:forest} is not misleading. 
A priori, the way we build the leveled forest could lead to two edges going up from its inner node $w$ of rank $r+1$. But whenever this happens, it means that there was some point $p$ in the subtree of $w$ that increased its rank to at least $r+2$ and later, while $p$ was not yet deleted, some other point $q$ in the subtree of $w$ increased its rank to at least $r+2$. Both $p$ and $q$ raised their rank from $r$ to $r+1$ in the same step when the node $w$ was created. This is exactly the scenario forbidden by our observation.

\begin{figure}
    \centering
    \includegraphics[width = \textwidth]{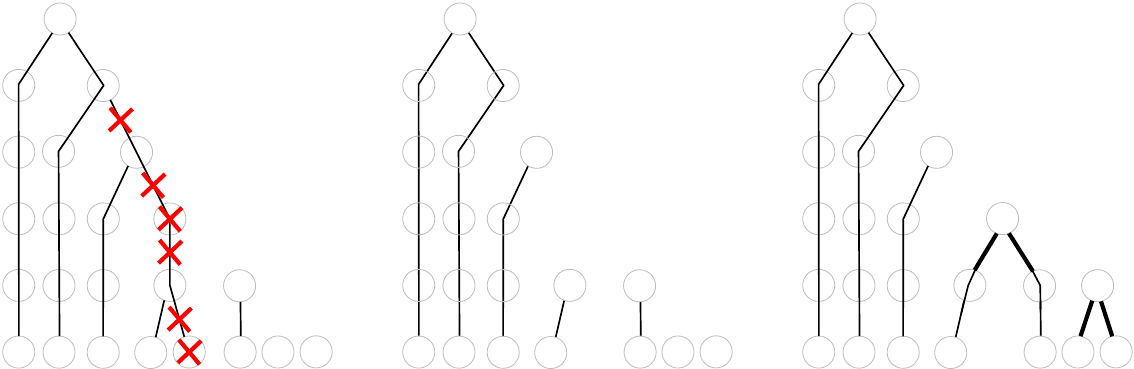}
    \caption{A leveled forest and its change after the deletion of one point. The leaves of the forest correspond to points in the current point set. Notice that the tree is decomposed into edge-disjoint paths that start at leaves and go only up. In the diagram, this can be seen in the left picture on the inner nodes of the tree with two children: only one path crosses the respective circle and continues up. \\
    Left: When a point is deleted, we first delete the whole path starting at the corresponding leaf from the forest. We also delete all childless inner nodes and the leaf itself (the two nodes with the red cross). \\
    Middle: The state of the forest after the deletion. Note that we maintain a decomposition of the forest into edge-disjoint paths. \\
    Right: After the deletion, some points increase their rank. For each rank, this corresponds to adding at most one node to the corresponding level. The newly created edges are thick; the decomposition to the edge-disjoint paths is updated accordingly. }
    \label{fig:forest}
\end{figure}

\begin{figure}
    \centering
    \includegraphics[width = \textwidth]{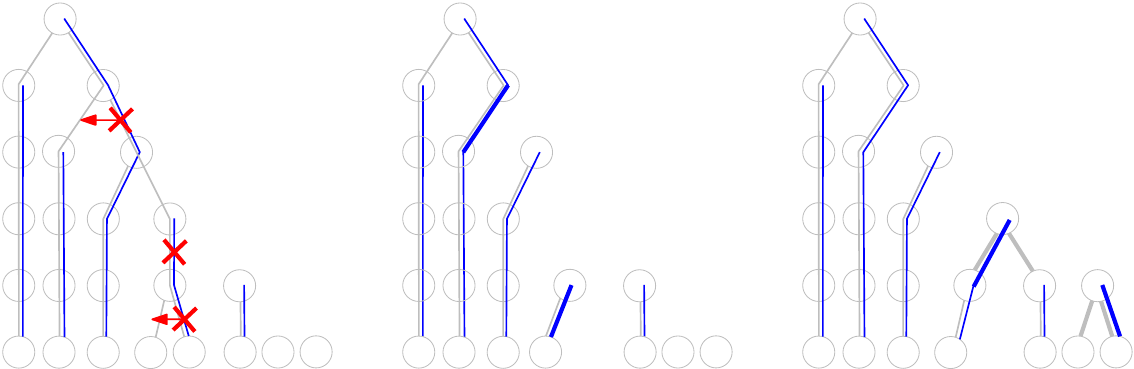}
    \caption{A leveled forest and its decomposition into vertex-disjoint paths (blue color), which go up from the leaves and define smooth ranks. \\
    Left: When a point is deleted, we have already seen in \cref{fig:forest} that the whole path corresponding to its geometrical rank is deleted. We need to deal with the blue edges on this path. If the top node of any such blue edge has another child, we make an edge to an arbitrary other child blue (red arrows). If there is no other child, we delete the blue edge. \\
    Middle: The state of the forest after the operation (the new blue edges are thick blue); note that the new blue paths still form a decomposition of the leveled forest into vertex-disjoint paths. \\
    Right: After the deletion, some points need to increase their geometric rank. Whenever a new node is added to the tree, we pick an edge to its arbitrary child and make it blue (thick blue edges). Note that after this operation we still have a valid decomposition of the forest into vertex-disjoint paths. 
    }
    \label{fig:forest2}
\end{figure}

\subsection{Smooth ranks and our algorithm}
\label{sec:o3}

From now on, we call the rank function from the basic algorithm the \emph{geometric rank}, $\ksi_G$. 
While using this rank function enables us to achieve the constant approximation ratio, its value can change abruptly for many points after every deletion.\footnote{We believe that with a certain potential function, one can prove that the basic algorithm achieves constant \emph{amortized} recourse. }

As a remedy, in our constant worst-case recourse algorithm, we maintain one more rank function, the \emph{smooth rank} $\ksi_S$, which is supposed to approximate the geometric rank, while only changing a bit in every step. 
Our final algorithm is then very similar to the basic algorithm, except for additionally maintaining the smooth ranks and outputting the $k$ points with the highest smooth rank as the output at every step. 

\paragraph{Defining smooth ranks}
How should the smooth rank be defined? The leveled forest leads to a straightforward definition. Recall that we keep a natural decomposition of the forest into \emph{edge}-disjoint paths that correspond to geometric ranks. We additionally maintain a separate decomposition of the forest into \emph{vertex}-disjoint paths going up from the leaves of the leveled forest. The smooth rank of every point is then defined as the length of the respective path (see \cref{fig:forest2}). Notice that we are not trying to keep any additional property tying the geometric and the smooth rank (e.g. requiring that the smooth rank of every point is at most its geometric rank etc.); the two rank functions are only tied together by the structure of the leveled forest. 

We now explain how we can maintain a decomposition into vertex-disjoint paths under insertions and deletions in a natural way. First, whenever a new point is inserted, we simply define its smooth rank to be equal to the geometric rank, i.e., the whole path corresponding to its geometric rank is colored blue. 

Next, consider the case when a point is deleted together with its path in the leveled forest corresponding to the geometric rank. See \cref{fig:forest2} that contains the same situation as \cref{fig:forest} from the perspective of smooth ranks. 

We iterate over edges deleted from the leveled forest. Whenever such an edge is used in our decomposition into vertex-disjoint paths (i.e., it is blue in \cref{fig:forest2}), we check if the higher node of that edge has any other child and if so, we add one such edge to our decomposition. Whenever the node does not have any other child, the node is anyway deleted from the leveled forest. See the left and the middle picture in \cref{fig:forest2} to see the smooth ranks before and after such a change. 

Recall that after deleting a point, some other points increase their geometric rank which corresponds to adding new nodes to the leveled forest. 
Whenever a new internal node is added, we simply pick an arbitrary  edge incident to it and add it to the set of vertex-disjoint paths (see \cref{fig:forest2}). This finishes the description of how we maintain a decomposition of the leveled forest into vertex-disjoint paths. This decomposition in turn defines smooth ranks and thus the description of our algorithm is finished. 

\paragraph{Constant recourse}
To argue that our algorithm has constant recourse, we notice that although in one step, many (in fact, $O(\log \Delta)$) points can change their smooth rank, there are only $O(1)$ changes ``per rank''. That is, for any $r$ there are only $O(1)$ points such that their smooth rank was at most $r$ before the update and more than $r$ after the update. Similarly, there are at most $O(1)$ points whose smooth rank decreases from more than $r$ to at most $r$. This implies that the set of $k$ points with the highest smooth rank changes by $O(1)$ in every step and we thus get constant recourse (a more detailed analysis shows that the recourse is in fact equal to $2$). 

\paragraph{Constant approximation}
Consider any point $p$ and assume that the geometric rank of the point with $k+1$-th largest geometric rank is equal to $r$. Recall from our discussion of the basic algorithm in \cref{sec:o1} that $OPT = \Omega(2^r)$ and that we can find a point $p'$ (possibly $p' = p$) such that $\ksi_G(p') \ge r+1$ and $d(p, p') = O(2^r) = O(OPT)$. 

Next, we show how to find a point $q$ such that $d(p', q) = O(OPT)$ and $q$ is included in our solution; this implies our algorithm has a constant approximation ratio. 

Locate the leaf corresponding to $p'$ in the leveled forest and walk up from that leaf $r + 1$ steps until reaching an internal node $w$ at level $r+1$: The fact that $\ksi_G(p') \ge r+1$ implies that such a node exists. This internal node is a vertex of one of the vertex-disjoint paths covering the forest; let $q$ be the point corresponding to this path (i.e., $q$ is the point corresponding to the bottommost vertex of the path). 

On one hand, the point $q$ is included as a center in our solution: This is because in any leveled forest, if there are at most $k$ edges reaching level $r+1$, then there are at most $k$ vertex-disjoint paths reaching that level. On the other hand, the distance between $p'$ and $q$ is $O(2^r)$: In fact, by the construction of the leveled forest, every pair of points that are in the subtree of $w$ in the leveled forest are of pairwise distance $O(2^r)$.

\section{Formal proof}
\label{sec:formal}

This section is devoted to the formal proof of \cref{thm:main}. Below is the technical statement proven in the rest of this section that implies \cref{thm:main}. 

\begin{restatable}{theorem}{maintechnical}
    Let $(\mathcal{X},d)$ be a metric space and $k,\Delta,n \in \mathbb{N}$. Let $P_1,P_2, \ldots,P_n$ be a sequence of point sets such that for every $i \in \{1,2,\ldots,n\}$, it holds that $P_i \subseteq \mathcal{X}$ and the pairwise distance between any two points in $P_i$ is at least $1$ and at most $\Delta$. Moreover, for every $i \in \{1,2,\ldots,n-1\}$, it holds that $|P_i \triangle P_{i+1}| = 1$. There exists a sequence of centers $C_1,C_2,\ldots, C_n$ such that for every $i \in \{1,2,\ldots,n\}$, it holds that $C_i \subseteq P_i$, $|C_i| \leq k$, $C_i$ is fully determined by $P_1,P_2,\ldots,P_i$ (together with ($\mathcal{X},d)$, $k$ and $\Delta$) and $C_i$ satisfies

    \[\max_{p \in P_i} d(p,C_i) \leq 24 \cdot \min_{C \subseteq \mathcal{X},|C| \leq k} \max_{p \in P_i} d(p,C).\]

    Moreover, for every $i \in \{1,2,\ldots,n-1\}$, if $P_{i+1} \setminus P_i \neq \emptyset$, then $|C_i \triangle C_{i+1}| \leq 2$ and if $P_i \setminus P_{i+1} \neq \emptyset$, then $|C_i \triangle C_{i+1}| \leq 4$.
\end{restatable}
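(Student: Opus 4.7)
My plan is to formalize the algorithm sketched in \cref{sec:overview} and then prove the three required properties (approximation, online determination, recourse) one-by-one. First I will set up the data structures: I maintain three objects in parallel, a geometric rank function $\xi_G : P_i \to \{0,1,\ldots,\lceil\log_2\Delta\rceil\}$, a leveled forest $F_i$ whose leaves are in bijection with $P_i$ and which is decomposed both into edge-disjoint ``geometric'' paths (upward from each leaf, of length $\xi_G(p)$) and into vertex-disjoint ``smooth'' paths (also upward from each leaf), and finally a smooth rank $\xi_S(p)$ defined as the length of the smooth path rooted at the leaf $p$. The output $C_i$ is then the set of the $k$ points with largest $\xi_S$ (breaking ties in any deterministic way). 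I will describe the update procedures exactly as in the overview: on insertion, assign $\xi_G$ to be the largest rank consistent with separation, append a geometric path of that length, and declare every edge of this new path both geometric and smooth. On deletion of $p$, first remove the geometric path from $p$ upward, prune childless internal nodes, and repair the smooth decomposition by rerouting each severed blue edge down to another child (or deleting it if none exists); then iterate $r=0,1,2,\ldots$ and greedily raise the geometric rank of any point that now violates maximality at level $r$, representing each rank-$r$-to-$(r{+}1)$ batch by a single new node on level $r+1$ attached to the points in that batch, and choosing an arbitrary child edge of this new node to be smooth.

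Next, I will verify the structural invariants hold after every update. The key claims are: (i) $\xi_G$ always satisfies separation and maximality, which follows immediately from how insertion/deletion are defined; (ii) after each update, each internal node of $F_i$ at level $r+1$ still has at most one geometric edge going upward, which is exactly the ``no two points can jump by two in one step'' observation formalized in \cref{sec:o2} (I will state and prove it as a lemma using the triangle-inequality argument sketched there); and (iii) the smooth decomposition remains a genuine vertex-disjoint path cover, which is a direct bookkeeping check against the update rules. I will also prove a ``width lemma'': for every $r$, the number of nodes of $F_i$ at level $r+1$ equals the number of points $p$ with $\xi_G(p) \geq r+1$, and both equal the number of smooth paths of length $\geq r+1$; this is what lets me conclude that the top-$k$ smooth-ranked points are exactly the bottom endpoints of all smooth paths of length $\geq r^\star+1$ where $r^\star$ is the rank of the $(k{+}1)$-th largest geometric rank.

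For the approximation analysis, I fix $i$ and let $r^\star$ be the $(k{+}1)$-th largest value of $\xi_G$ on $P_i$. The lower bound $\mathrm{OPT}\geq 2^{r^\star}/2$ is immediate from separation applied to these $k{+}1$ points and pigeonhole. For the upper bound, I fix any $p \in P_i$; by iterating maximality I build a chain $p = p_0,p_1,\ldots,p_t$ with $\xi_G(p_{j+1}) > \xi_G(p_j)$, $d(p_j,p_{j+1}) < 2^{\xi_G(p_j)+1}$, stopping at the first $p_t = p'$ with $\xi_G(p') > r^\star$. Summing the geometric series gives $d(p,p') < 2^{r^\star+2}$. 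Now I locate the leaf $p'$ in $F_i$, walk upward $r^\star+1$ steps along geometric edges to an internal node $w$ at level $r^\star+1$, and let $q$ be the leaf at the bottom of the smooth path through $w$. By the width lemma, $q \in C_i$; and by induction on level, all leaves in the subtree of $w$ are within distance $\sum_{j=0}^{r^\star} 2^{j+1} < 2^{r^\star+2}$ of each other (this is where I will have to be careful with constants). Combining, $d(p,q) \leq d(p,p') + d(p',q) < 2^{r^\star+3}$, which is at most $24\cdot \mathrm{OPT}$ after plugging in $\mathrm{OPT} \geq 2^{r^\star-1}$. I expect this constant-chase to be where the factor $24$ is pinned down; if my bookkeeping yields a slightly different constant I will adjust the chain summation accordingly.

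Finally, I handle recourse. The claim I will prove is: for every $r$, at most one point crosses the threshold $\xi_S > r$ upward per update, and at most two cross it downward per update. Upward changes come only from new smooth edges, and the update rules create at most one new smooth edge per level (on insertion, only on the geometric path of the inserted leaf; on deletion, at most one per new node of the leveled forest, and at most one new node per level by invariant (ii)). Downward changes come from smooth edges destroyed during deletion; by the rerouting rule, the only unrecoverable loss at each level is when the top endpoint of a severed smooth edge has no surviving child, and I will show this can happen at most twice per level per update (once because the deleted geometric path passed through, and once due to its subsequent rerouting cascade). Integrating over all levels and observing that the top-$k$ set of smooth ranks can change only when the threshold $r^\star_i$ is crossed, I conclude $|C_i \triangle C_{i+1}| \leq 2$ on insertion and $\leq 4$ on deletion. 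The main obstacle I foresee is carefully charging the downward changes in the deletion case, since the repair cascade interacts with the subsequent rank-raising phase; I plan to handle this by writing the deletion as a sequence of atomic edge-removals and single-node additions, and proving the per-step bound for each atomic operation separately.
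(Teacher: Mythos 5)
Your plan follows the paper's approach closely: geometric rank $\xi_G$ with the separation and maximality properties, a leveled forest carrying two path decompositions (edge-disjoint for $\xi_G$, vertex-disjoint for $\xi_S$), and deletion split into atomic rank-decrease steps followed by per-level batch increases. However, several of the intermediate claims you state are false as written. Your ``width lemma'' asserts that the number of nodes of $F_i$ at level $r+1$ \emph{equals} $|\{p : \xi_G(p)\geq r+1\}|$; that is wrong, since, for example, a root at level $r+1$ with two single-leaf subtrees has one node at that level but two leaves of geometric rank $r+1$ (the fourth valid-triple property forces each child edge of the root to carry a distinct such leaf). The correct statement is the one-sided inequality $\xi_G^*(i)\geq\xi_S^*(i)$ of \cref{lem:valid_triple_maximality_implies_valid_tuple} --- the number of edges crossing into level $r+1$ is at least the number of nodes there --- which happily is the direction the approximation argument needs. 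Separately, your distance bound $d(p',q)<2^{r^\star+2}$ uses the subtree-diameter estimate for a node at level $r^\star$, but $w$ sits at level $r^\star+1$, so the right bound is $\sum_{i=1}^{r^\star+1}2^{i+1}=2^{r^\star+3}-4$; with that correction the total is $12\cdot2^{r^\star}$ against $OPT\geq2^{r^\star}/2$, which is where the factor $24$ actually comes from, not the $16$ your accounting suggests.

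Your per-level recourse counts are also reversed, and the charging sketch does not match what actually happens. During a deletion, the smooth rank can \emph{increase} both in the reroute phase (a sibling subtree's leaf inherits the severed blue path and jumps from level $\xi_G(q)-1$ up to the old $\xi_S(p_v)$) and again in the batch-increase phase (the new level node picks a child edge to color blue), so the correct bound is at most \emph{two} upward crossings per level; your count omits the reroute contribution entirely. Conversely, downward crossings come only from the single point whose blue path was cut, plus the departure of $q$ itself --- at most one surviving point per level, not two; there is no ``rerouting cascade'' in the downward direction because each reroute trades one downward crossing for at most one upward one, independently per atomic step. This $1$-down / $2$-up bookkeeping is exactly what \cref{lem:deletewithoutmaximality,lem:deletion} establish. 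Your conclusion $|C_i\triangle C_{i+1}|\leq 4$ coincides with the paper's only because your two errors cancel, so the per-level lemmas as you have written them would not survive scrutiny even though the final number is right.
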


Let us make a few remarks about the theorem. The time complexity of our algorithm in every step is dominated by the need of solving nearest-neighbour problems to determine whether a rank of a point can be increased. Thus, we spend a large polynomial time every step. It would be interesting to have an algorithm that needs only $\poly(k \log (n\Delta))$ time in every step. 

Second, we don't know whether the recourse of two after deletion is optimal or can be improved to one. It would be interesting to see an example showing that the recourse cannot be improved or an algorithm with better recourse.

The rest of the section is structured as follows. 
First, we review some basic notation in \cref{sec:notation}. Next, \cref{sec:invariants} is devoted to the formalization of the rank functions, leveled forest, and its decomposition to edge- and vertex-disjoint paths that correspond to ranks. The meat of this section is \cref{sec:algorithm} where we describe our algorithm for updating the leveled forest in case of an Insert and Delete operation, thus formalizing the algorithm sketched in \cref{fig:forest,fig:forest2}. We finish the proof in \cref{sec:finish}.

\subsection{Basic notation}
\label{sec:notation}
We use $A \triangle B$ to denote the symmetrical difference of two sets $A$ and $B$. 
We use $d(p,q)$ to denote the distance between $p$ and $q$ in the input metric space $\mathcal{X}$. We use $d(p,C) = \min_{c \in C} d(p, c)$. 
For a pointset $P$ and its subset $C$, we define

\[COST(P,C) = \max_{p \in P} d(p,C).\]

For every $k \in \mathbb{N}$, we define

\[OPT_k(P) := \min_{C \subseteq \mathcal{X}, |C| \leq k} COST(P,C).\]

\subsection{Basic definitions and invariants}\label{sec:invariants}

This section lists the invariants that we keep throughout our algorithm. Mostly, we formalize the notions of rank function and the leveled forest that, as we discussed in \cref{sec:overview}, are crucial for our algorithm. 

\paragraph{Rank functions}
We next define the rank function and formalize its properties and implications. 

\begin{definition}[Rank Function $\ksi$ and Ordered Rank $\ksi^*$]
Let $(\mathcal{X},d)$ be a metric space.
A rank function for a point set $P \subseteq \mathcal{X}$ assigns each $p \in P$ a nonnegative integer, denoted by $\ksi(p)$. Additionally, for a given $i \in \{1, 2, \ldots, |P|\}$, let $\ksi^*(i)$ denote the $i$-th largest rank assigned by $\ksi$.
\end{definition}

\begin{definition}[Separation Property]
\label{def:separation}
    Let $(\mathcal{X},d)$ be a metric space, $P \subseteq \mathcal{X}$ and $\ksi \colon P \mapsto \mathbb{N}_0$ be a rank function. We say that $\ksi$ satisfies the separation property iff for any two distinct points $p_1,p_2 \in P$, it holds that $d(p_1,p_2) \geq 2^{\min(\ksi(p_1),\ksi(p_2))}$.
\end{definition}
\begin{lemma}[Lower Bound on OPT based on Separation]
\label{lem:separation}
 Let $(\mathcal{X},d)$ be a metric space, $P \subseteq \mathcal{X}$ and $\ksi \colon P \mapsto \mathbb{N}_0$ be a rank function satisfying the separation property. Then, for every $k \in \{1,2,\ldots,|P|-1\}$, it holds that 

\[OPT_k(P) \geq 0.5 \cdot 2^{\ksi^*(k+1)}.\]
\end{lemma}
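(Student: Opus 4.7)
The plan is a standard pigeonhole-plus-triangle-inequality argument. Let $r := \ksi^*(k+1)$ and let $Q := \{p \in P : \ksi(p) \geq r\}$. By definition of $\ksi^*$, the set $Q$ contains at least $k+1$ points (the top $k+1$ ranked points all have rank at least $r$).

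Next I would invoke the separation property to get a pairwise distance lower bound on $Q$. For any two distinct $p_1, p_2 \in Q$, by \cref{def:separation}, $d(p_1, p_2) \geq 2^{\min(\ksi(p_1), \ksi(p_2))} \geq 2^r$, since both ranks are at least $r$. So $Q$ is a set of at least $k+1$ points with pairwise distance at least $2^r$.

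Now take an arbitrary center set $C \subseteq \mathcal{X}$ with $|C| \leq k$. By pigeonhole, there exist two distinct points $p_1, p_2 \in Q$ whose nearest centers in $C$ coincide, say at $c \in C$. By the triangle inequality,
\[
2^r \;\leq\; d(p_1, p_2) \;\leq\; d(p_1, c) + d(p_2, c) \;\leq\; 2 \cdot \max_{p \in P} d(p, C) \;=\; 2 \cdot COST(P, C).
\]
Rearranging gives $COST(P, C) \geq 0.5 \cdot 2^r$. Taking the infimum over all admissible $C$ yields $OPT_k(P) \geq 0.5 \cdot 2^{\ksi^*(k+1)}$, as desired.

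There is no real obstacle here; the only thing to be slightly careful about is the pigeonhole step, which requires $|Q| \geq k+1$ and $|C| \leq k$, both of which hold by construction (note that the hypothesis $k \leq |P| - 1$ is exactly what guarantees $\ksi^*(k+1)$ is well-defined, so $Q$ is nonempty with at least $k+1$ elements).
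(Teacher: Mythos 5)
Your proof is correct and follows essentially the same approach as the paper's: pick the $k+1$ top-ranked points, use the separation property to get pairwise distance at least $2^{\ksi^*(k+1)}$, and then apply pigeonhole plus the triangle inequality to any candidate center set of size at most $k$. No gaps.
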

\begin{proof}
We show that for every set $C \subseteq \mathcal{X}$ with $|C| \leq k$, there exists a point $p \in P$ such that $d(p,C) \geq 0.5 \cdot 2^{\ksi^*(k+1)}$. This then implies that $OPT_k(P) \geq 0.5 \cdot 2^{\ksi^*(k+1)}$.

By definition of $\ksi^*(k+1)$, there exists a set $P' \subseteq P$ of size $k+1$ such that for every $p \in P'$, $\ksi(p) \geq \ksi^*(k+1)$. As $\ksi$ satisfies the separation property, the pairwise distance between any two points in $P'$ is at least $2^{\ksi^*(k+1)}$.

Consider any set $C \subseteq \mathcal{X}$ with $|C| \leq k$. By the pigeonhole principle, there must exist two distinct points $p_1, p_2 \in P'$ and a center $c \in C$ such that $d(p_1,c) = d(p_1,C)$ and $d(p_2,c) = d(p_2,C)$. As $d(p_1, p_2) \geq 2^{\ksi^*(k+1)}$, the triangle inequality inplies that $d(p_1,c) \geq 0.5 \cdot 2^{\ksi^*(k+1)}$ or $d(p_2,c) \geq 0.5 \cdot 2^{\ksi^*(k+1)}$ must hold. Thus, we have found a point $p \in \{p_1,p_2\}$ such that $d(p,C) \geq 0.5 \cdot 2^{\ksi^*(k+1)}$.

Since the set $C$ was chosen arbitrarily, we can conclude that $OPT_k(P) \geq 0.5 \cdot 2^{\ksi^*(k+1)}$.
\end{proof}

\begin{definition}[Maximality Property]
\label{def:maximality}
    Let $(\mathcal{X},d)$ be a metric space, $P \subseteq \mathcal{X}$ and $\ksi \colon P \mapsto \mathbb{N}_0$ be a rank function. We refer to a point $p \in P$ as maximal with respect to $\ksi$ if either $\ksi(p) > \ksi(p')$ for every $p' \in P \setminus \{p\}$ or if there exists $p' \in P$ with $\ksi(p') > \ksi(p)$ and $d(p,p') < 2^{\ksi(p) + 1}$. We say that $\ksi$ is maximal if any point in $P$ is maximal with respect to $\ksi$.
\end{definition}

\begin{lemma}
\label{lem:maximality}
  Let $(\mathcal{X},d)$ be a metric space, $P \subseteq \mathcal{X}$ and $\ksi \colon P \mapsto \mathbb{N}_0$ be a rank function satisfying the maximality property. For every $p \in P$ and $i \in \{2,3,\ldots,|P|\}$, there exists a point $q \in P$ with $\ksi(q) > \ksi^*(i)$ and $d(p,q) \leq 4 \cdot 2^{\ksi^*(i)}$.

\end{lemma}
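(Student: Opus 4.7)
The plan is to prove this by building a chain of points with strictly increasing ranks, starting from $p$, using the maximality property at each step, and bounding the total distance traveled via a geometric series.

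Set $r = \ksi^*(i)$ and suppose first that $\ksi(p) > r$; then we may simply take $q = p$, since $d(p,p)=0 \le 4\cdot 2^r$. Otherwise $\ksi(p) \le r$, and I build inductively a sequence $p = p_0, p_1, p_2, \ldots$ as follows. Given $p_j$, if $\ksi(p_j) > r$ we stop. Otherwise, apply the maximality of $\ksi$ at $p_j$ to obtain $p_{j+1} \in P$ with $\ksi(p_{j+1}) > \ksi(p_j)$ and $d(p_j,p_{j+1}) < 2^{\ksi(p_j)+1}$; for this, I need to rule out the "unique maximum" alternative in Definition~\ref{def:maximality}. Here I would observe a small preliminary fact: the maximum rank is necessarily attained by a unique point, because two distinct points sharing the top rank could satisfy neither clause of the maximality definition. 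Consequently $\ksi^*(1) > \ksi^*(2) \ge \ksi^*(i) = r$ for any $i \ge 2$, so every point of rank $\le r$ is \emph{not} the unique maximum and the second clause of maximality must apply.

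Because the ranks $\ksi(p_0) < \ksi(p_1) < \cdots$ are strictly increasing non-negative integers bounded above by $\ksi^*(1)$, the process terminates after some $t$ steps at a point $p_t$ with $\ksi(p_t) > r$. Set $q := p_t$.

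For the distance bound, the triangle inequality gives
\[
d(p,q) \;\le\; \sum_{j=0}^{t-1} d(p_j, p_{j+1}) \;<\; \sum_{j=0}^{t-1} 2^{\ksi(p_j)+1}.
\]
Since the $\ksi(p_j)$ are strictly increasing integers with $\ksi(p_{t-1}) \le r$, we have $\ksi(p_j) \le r - (t-1-j)$, so the right-hand side is at most $2^{r+1}\sum_{m=0}^{t-1} 2^{-m} < 2^{r+2} = 4 \cdot 2^{\ksi^*(i)}$, as desired.

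The only place requiring care is the verification that the chain can always be extended until it passes rank $r$; this is the step where I invoke the observation that the unique top rank strictly exceeds $\ksi^*(2) \ge \ksi^*(i)$. Everything else is a straightforward triangle-inequality plus geometric-series calculation, entirely analogous to the informal argument sketched in Section~\ref{sec:o1}.
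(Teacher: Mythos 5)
Your proof is correct and follows essentially the same approach as the paper's: both build a chain of points with strictly increasing rank starting from $p$ via the maximality property, and bound the accumulated distance by a geometric series $2^{r+1}\sum_m 2^{-m} < 4\cdot 2^r$. Your explicit verification that the top rank is uniquely attained (hence $\ksi^*(1) > \ksi^*(i)$ for $i \ge 2$, so the second clause of \cref{def:maximality} applies at every step) is a detail the paper's proof leaves implicit, but the underlying argument is the same, with the paper phrasing it as a backward induction on $j = \ksi^*(i) - \ksi(p)$ rather than a forward chain construction.
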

\begin{proof}
We prove by induction on $j \in \{0,1,\ldots,\ksi^*(i)\}$ that for every point $p$ with $\ksi(p) \geq \ksi^*(i) - j$, there exists a point $q \in P$ with $\ksi(q) > \ksi^*(i)$ and $d(p,q) \leq 2 \cdot \sum_{\ell = \ksi^*(i)-j}^{\ksi^*(i)} 2^{\ell} \leq 4 \cdot 2^{\ksi^*(i)}$.

\textbf{Base case:} For $j=0$, let $p$ be a point with $\ksi(p) = \ksi^*(i)$ (if $\ksi(p) > \ksi^*(i)$, we can set $q = p$). As $\ksi$ is maximal and $p$ is not the point with the largest rank, there exists a point $q \in P$ such that $\ksi(q) > \ksi(p) = \ksi^*(i)$ and $d(p, q) < 2^{\ksi(p) + 1} = 2^{\ksi^*(i)+1} = 2 \cdot \sum_{\ell = \ksi^*(i)}^{\ksi^*(i)} 2^{\ell} \leq 4 \cdot 2^{\ksi^*(i)}$. Therefore, the base case holds.

\textbf{Inductive step:} Assume the statement holds for some $j \geq 0$. We show it also holds for $j+1$. Let $p$ be a point with $\ksi(p) = \ksi^*(i) - (j+1)$. As $\ksi$ is maximal, there exists a point $q' \in P$ such that $\ksi(q') > \ksi(p)$ and $d(p, q') < 2^{\ksi(p) + 1} = 2^{\ksi^*(i) - j}$. By the inductive hypothesis, there exists a point $q \in P$ with $\ksi(q) > \ksi^*(i)$ and $d(q', q) \leq 2 \cdot \sum_{\ell = \ksi^*(i) - j}^{\ksi^*(i)} 2^{\ell}$. By the triangle inequality, $d(p, q) \leq d(p, q') + d(q', q) \leq 2 \cdot \sum_{\ell = \ksi^*(i) - (j+1)}^{\ksi^*(i)} 2^{\ell} \leq 4 \cdot 2^{\ksi^*(i)}$. This completes the inductive step and hence the proof.
\end{proof}

\paragraph{Valid tuples}
The following \cref{def:valid_tuple} of a valid tuple  and \cref{lem:sufficient_for_good_approximation} formalize the properties that two rank functions $\ksi_G$ (a geometric rank) and $\ksi_S$ (smooth rank) need to satisfy so as to imply constant approximate ratio. 

\begin{definition}[Valid Tuple]
\label{def:valid_tuple}
 Let $(\mathcal{X},d)$ be a metric space, $P \subseteq \mathcal{X}$ and $\ksi_G,\ksi_S \colon P \mapsto \mathbb{N}_0$ be two rank functions for $P$.
 We refer to $(\ksi_G,\ksi_S)$ as a valid tuple for $P$ iff the following properties are satisfied:
\begin{enumerate}
    \item The rank function $\ksi_G$ satisfies both the separation and the maximality property.
    \item It holds that $\ksi^*_G(i) \geq \ksi^*_S(i)$ for every $i \in \{1,2,\ldots,|P|\}$.
    \item For every $r \in \mathbb{N}_0$ and $p \in P$ with $\ksi_G(p) \geq r$, there exists a point $q \in P$ with $\ksi_S(q) \geq r$ and $d(p,q) \leq 4 \cdot 2^r$.
\end{enumerate}
\end{definition}

\begin{lemma}
\label{lem:sufficient_for_good_approximation}
 Let $(\mathcal{X},d)$ be a metric space, $P \subseteq \mathcal{X}$ and $(\ksi_G,\ksi_S)$ be a valid tuple for $P$.
Let $k \in \{1,2,\ldots,|P|-1\}$ and $C \subseteq P$ such that each point $p \in P$ with $\ksi_S(p) > \ksi_S^*(k+1)$ is contained in $C$. Then,

  \[COST(P,C) \leq 24 \cdot OPT_k(P).\] 
\end{lemma}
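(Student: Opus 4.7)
The plan is to combine the lower bound on $OPT_k(P)$ coming from the separation property of $\ksi_G$ (via \cref{lem:separation}) with a two-step ``chaining'' argument that uses the maximality of $\ksi_G$ and property 3 of the valid tuple to connect an arbitrary $p\in P$ to a point in $C$.

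First I would set $r = \ksi_S^*(k+1)$ and observe that property 2 of a valid tuple yields $\ksi_G^*(k+1) \ge r$, while \cref{lem:separation} gives $OPT_k(P) \ge 0.5 \cdot 2^{\ksi_G^*(k+1)}$. Rearranging, $2^{\ksi_G^*(k+1)} \le 2 \cdot OPT_k(P)$, which is the key bound that ``converts'' geometric ranks into multiples of $OPT_k(P)$. The only thing to check here is the edge case $k \ge 1$ and $k+1 \le |P|$, which is given.

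Next, fix an arbitrary $p \in P$ and invoke \cref{lem:maximality} on $\ksi_G$ with $i = k+1$: this produces a point $q \in P$ with $\ksi_G(q) > \ksi_G^*(k+1)$, i.e.\ $\ksi_G(q) \ge \ksi_G^*(k+1) + 1$, such that $d(p,q) \le 4 \cdot 2^{\ksi_G^*(k+1)}$. Now apply property 3 of the valid tuple to $q$ at level $\ksi_G^*(k+1) + 1$: there exists $q' \in P$ with $\ksi_S(q') \ge \ksi_G^*(k+1) + 1 \ge r+1 > \ksi_S^*(k+1)$ and $d(q,q') \le 4 \cdot 2^{\ksi_G^*(k+1) + 1}$. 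By the hypothesis on $C$, the inequality $\ksi_S(q') > \ksi_S^*(k+1)$ forces $q' \in C$.

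Finally, triangle inequality gives
\[
d(p,C) \le d(p,q) + d(q,q') \le 4 \cdot 2^{\ksi_G^*(k+1)} + 8 \cdot 2^{\ksi_G^*(k+1)} = 12 \cdot 2^{\ksi_G^*(k+1)} \le 24 \cdot OPT_k(P),
\]
using the bound from the first step. Taking the maximum over $p \in P$ yields the claim. I don't anticipate a genuine obstacle here: the lemma is essentially a bookkeeping exercise that combines the three structural properties of a valid tuple. The only mild subtlety is making sure one chains to a point with \emph{smooth} rank strictly exceeding $\ksi_S^*(k+1)$ (hence in $C$), rather than just large geometric rank, which is exactly what property 3 of \cref{def:valid_tuple} is designed for.
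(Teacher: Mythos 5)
Your proof is correct and follows essentially the same route as the paper's: use Lemma~\ref{lem:separation} together with property~2 of the valid tuple to relate $2^{\ksi_G^*(k+1)}$ to $OPT_k(P)$, then chain $p \to q \to q'$ via Lemma~\ref{lem:maximality} and property~3, concluding with the triangle inequality to bound $d(p,C)$ by $12\cdot 2^{\ksi_G^*(k+1)}\leq 24\cdot OPT_k(P)$.
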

\begin{proof}
 As $\ksi_G$ satisfies the separation property, we can conclude that $OPT_k(P) \geq 0.5 \cdot 2^{\ksi_G^*(k+1)}$ using \cref{lem:separation}. It thus suffices to show that $COST(P,C) \leq 12 \cdot 2^{\ksi_G^*(k+1)}$.
 Consider an arbitrary point $p \in P$. As $\ksi_G$ satisfies the maximality property, \cref{lem:maximality} implies that there exists a point $q \in P$ with $\ksi_G(q) > \ksi_G^*(k+1)$ and $d(p,q)\leq 4 \cdot 2^{\ksi_G^*(k+1)}$. Using the third property of a valid tuple, there exists a point $q' \in P$ with $\ksi_S(q') > \ksi^*_G(k+1)$ and $d(q,q') \leq 4 \cdot 2^{\ksi^*_G(k+1) + 1}$. As $\ksi_S(q') > \ksi^*_G(k+1) \geq \ksi^*_S(k+1)$, we get that $q' \in C$ by the second property. Using triangle inequality, we get

 \[d(p,C) \leq d(p,q') \leq d(p,q) + d(q,q') \leq 4 \cdot 2^{\ksi_G^*(k+1)} + 4 \cdot 2^{\ksi^*_G(k+1) + 1} = 12 \cdot 2^{\ksi_G^*(k+1)}\]
 which concludes the proof.
\end{proof}

\paragraph{Definitions related to leveled forest}
We next describe the formal definition of the leveled forest. Formally, a leveled forest is a structure that enables us to maintain a valid tuple from \cref{def:valid_tuple}. 

\begin{definition}[Leveled Forest]
\label{def:leveled_forest}
Let $(\mathcal{X},d)$ be a metric space and $P \subseteq \mathcal{X}$. We refer to $F$ as a leveled forest for $P$ if the following conditions are satisfied:

\begin{enumerate}
    \item $F$ is a rooted forest with edges oriented towards the respective root.
    \item The set of leafs in $F$ is equal to $P$.
    \item For each rooted tree $T$ in $F$, every leaf in $T$ has the same distance to the root of $T$.
\end{enumerate}
For consistency, if $P = \emptyset$, we refer to $F$ as a leveled forest for $P$ if both the vertex and the edge set is empty.
\end{definition}

\begin{definition}[Height Function $h_F$]
\label{def:height_function}
    Let $(\mathcal{X},d)$ be a metric space, $P \subseteq \mathcal{X}$ and $F$ be a leveled forest for $P$.
 We associate a height function $h_F \colon V(F) \mapsto \mathbb{N}_0$ with $F$, where for each node $v$ in $F$, $h_F(v)$ represents the number of edges on the path from $v$ to each leaf in its subtree.
\end{definition}

The following definition of a valid triple formalizes that we want to maintain not just the forest itself, but also its two decompositions in edge-disjoint and vertex-disjoint paths that correspond to the geometric and smooth ranks respectively. 

\begin{definition}[Valid Triple]
\label{def:valid_triple}
Let $(\mathcal{X},d)$ be a metric space, $P \subseteq \mathcal{X}$, $F$ be a leveled forest for $P$ and $\ksi_G, \ksi_S$ be two rank functions for $P$. We say that $(F, \ksi_G, \ksi_S)$ is a valid triple for $P$ if it satisfies the following conditions:
\begin{enumerate}
    \item The rank function $\ksi_G$ satisfies the separation property.
    \item For each point $p$ in $P$, the values of $\ksi_G(p)$ and $\ksi_S(p)$ are at most the length of the path from $p$ to its root in $F$.
    \item For each node $v$ in the forest $F$, there is exactly one leaf $p_v$ in its subtree such that $\ksi_S(p_v) \geq h_F(v)$.
    \item For each edge $(u,v)$ in $F$, where $v$ is the parent of $u$, there is exactly one leaf $p_{uv}$ in the subtree of $u$ such that $\ksi_G(p_{uv}) \geq h_F(v)$.
    \item Let $v$ be a node in $F$, and $p_1$ and $p_2$ be two distinct leaves in the subtree of $v$ with $\min(\ksi_G(p_1), \ksi_G(p_2)) \geq h_F(v)$. Then, $d(p_1, p_2) < 2^{h_F(v) + 1}$.
\end{enumerate}
\end{definition}

\paragraph{Valid triples imply valid tuples}
We show that maintaining a leveled forest in the sense of \cref{def:valid_triple} implies that the corresponding rank function have the required properties in the sense of \cref{def:valid_tuple}. First, we prove that for any node $v$ in the leveled forest, all points in its subtree are close to each other.

\begin{lemma}
\label{lem:valid_triple_implies_small_distance}
    Let $(\mathcal{X},d)$ be a metric space, $P \subseteq \mathcal{X}$ and $(F,\ksi_G,\ksi_S)$ be a valid triple for $P$.
    Consider a common ancestor $v$ of two leaves $q$ and $p$ in $F$, where $\ksi_G(q) \geq h_F(v)$. Then, the distance $d(q,p)$ between $q$ and $p$ is at most $\sum_{i=1}^{h_F(v)} 2^{i+1} = 4 \cdot 2^{h_F(v)} - 4$.
\end{lemma}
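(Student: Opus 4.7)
The plan is to prove \cref{lem:valid_triple_implies_small_distance} by induction on $h_F(v)$. The base case $h_F(v) = 0$ is immediate: then $v$ is itself a leaf, so $q = p = v$ and $d(q,p) = 0 = \sum_{i=1}^{0} 2^{i+1}$.

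For the inductive step, fix a node $v$ with $h_F(v) = h \geq 1$ and two leaves $q, p$ in its subtree with $\ksi_G(q) \geq h$. Let $u_p$ be the child of $v$ lying on the path from $v$ to $p$ (if $v = p$ the claim is trivial). Since $h_F(u_p) = h - 1$, there is an edge $(u_p, v)$ in $F$, and by property~4 of \cref{def:valid_triple} there is a unique leaf $q' = p_{u_p v}$ in the subtree of $u_p$ with $\ksi_G(q') \geq h_F(v) = h$. I will bound $d(q,p)$ via the triangle inequality $d(q,p) \leq d(q,q') + d(q',p)$, handling the two terms with different tools.

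For $d(q,q')$: both $q$ and $q'$ are leaves in the subtree of $v$, both have $\ksi_G$-rank at least $h = h_F(v)$, and $v$ is their common ancestor. If $q = q'$ then $d(q,q') = 0$; otherwise property~5 of \cref{def:valid_triple} gives $d(q,q') < 2^{h_F(v)+1} = 2^{h+1}$. For $d(q',p)$: both $q'$ and $p$ are leaves in the subtree of $u_p$, which has height $h-1$, and $\ksi_G(q') \geq h > h-1 = h_F(u_p)$, so the inductive hypothesis applied to $u_p$ with the pair $(q',p)$ yields $d(q',p) \leq \sum_{i=1}^{h-1} 2^{i+1}$. Combining the two bounds,
\[
d(q,p) \;\leq\; 2^{h+1} + \sum_{i=1}^{h-1} 2^{i+1} \;=\; \sum_{i=1}^{h} 2^{i+1},
\]
and evaluating the geometric sum gives $4 \cdot 2^{h_F(v)} - 4$, completing the induction.

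I don't expect a serious obstacle here: property~4 supplies the ``representative'' $q'$ that sits in the subtree of $u_p$ at the right rank, property~5 controls the horizontal distance between two leaves of sufficient rank sharing an ancestor, and the induction handles the vertical descent into $u_p$. The one subtlety worth noting is that $v$ need not be the lowest common ancestor of $p$ and $q$, which is why the proof only uses the child $u_p$ on the way to $p$ and avoids case-splitting on whether $p$ and $q$ lie in the same child subtree of $v$; the argument goes through uniformly as long as we take $q'$ to be the rank-$h$ representative guaranteed by the edge $(u_p, v)$.
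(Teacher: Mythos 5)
Your proof is correct and follows essentially the same strategy as the paper's: descend into the child $u_p$ on the way to $p$, extract the rank-$h$ representative $q' = p_{u_p v}$ guaranteed by property~4 of \cref{def:valid_triple}, bound $d(q,q')$ by property~5, bound $d(q',p)$ by the inductive hypothesis on $u_p$, and combine with the triangle inequality. The one place you diverge is a small but genuine simplification: the paper's inductive step explicitly case-splits on whether $p$ and $q$ descend through the same child of $v$, invoking the induction directly in the $u_p = u_q$ case. You observe that this split is unnecessary, because when $q$ lies in the subtree of $u_p$ the uniqueness in property~4 forces $q = q'$, so the term $d(q,q')$ just vanishes and the same two-term bound applies verbatim. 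You also anchor the induction at $h_F(v)=0$ rather than $h_F(v)=1$, which is cleaner and covers the degenerate leaf case that the paper's base case implicitly elides.
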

\begin{proof}
We prove the statement by induction on $h_F(v)$.

\textbf{Base case:} Assume $h_F(v) = 1$ and let $p$ and $q$ be two distinct leaves in the subtree of $v$. As $(F,\ksi_G,\ksi_S)$ is a valid triple, the fourth property implies that both $\ksi_G(p) \geq 1$ and $\ksi_G(q) \geq 1$. Thus, the fifth property of a valid triple implies that $d(q,p) < 2^{h_F(v) + 1} = \sum_{i=1}^{h_F(v)} 2^{i+1}$. Therefore, the base case holds.

\textbf{Inductive step:} Assume the statement holds for some $j$. We show that it also holds for $j+1$. Consider a common ancestor $v$ of two leaves $p$ and $q$ with $h_F(v) = j+1$ and $\ksi_G(q) \geq h_F(v)$.
Let $u_p$ and $u_q$ be the two children of $v$ such that $p$ is in the subtree of $u_p$ and $q$ is in the subtree of $u_q$. If $u_p = u_q$, which occurs when $v$ is not the lowest common ancestor of $p$ and $q$, we can directly use induction to conclude that $d(q,p) \leq \sum_{i=1}^{h_F(u_p)} 2^{i+1} \leq \sum_{i=1}^{h_F(v)} 2^{i+1}$.
Thus, it remains to consider the case where $u_p \neq u_q$. Let $p_{u_pv}$ be the leaf in the subtree of $u_p$ satisfying $\ksi_G(p_{u_pv}) \geq h_F(v)$. Such a node $p_{u_pv}$ is guaranteed to exist because of the fourth property of a valid triple. Using induction, we have $d(p_{u_pv},p) \leq \sum_{i=1}^{h_F(u_p)} 2^{i+1}$. By the fifth property, we have $d(q,p_{u_pv}) < 2^{h_F(v) + 1}$. Therefore, using the triangle inequality, we can conclude that:
\[d(q,p) \leq d(q,p_{u_pv}) + d(p_{u_pv},p) \leq 2^{h_F(v) + 1} + \sum_{i=1}^{h_F(u_p)} 2^{i+1} = \sum_{i=1}^{h_F(v)} 2^{i+1}.\]
This completes the inductive step, and hence the proof.
\end{proof}

Next, we prove that valid triples imply valid tuples. 
In view of \cref{lem:sufficient_for_good_approximation}, this shows that maintaining the leveled forest in the sense of \cref{def:valid_triple} is enough to achieve constant approximation.

\begin{lemma}[Valid Triple together with Maximality implies Valid Tuple]
\label{lem:valid_triple_maximality_implies_valid_tuple}
   Let $(\mathcal{X},d)$ be a metric space, $P \subseteq \mathcal{X}$ and $(F,\ksi_G,\ksi_S)$ be a valid triple for $P$. Assume that $\ksi_G$ additionally satisfies the maximality property (\cref{def:maximality}). Then, $(\ksi_G,\ksi_S)$ is a valid tuple for $P$.
\end{lemma}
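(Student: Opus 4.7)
The plan is to verify the three defining conditions of a valid tuple (\cref{def:valid_tuple}) in turn. Condition (1) is immediate: $\ksi_G$ satisfies the separation property directly from property (1) of a valid triple, and it satisfies maximality by hypothesis.

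For condition (2), that $\ksi_G^*(i) \geq \ksi_S^*(i)$ for every $i$, I would argue by counting: it suffices to show that for each integer $r \geq 0$, the number of points with $\ksi_G(p) \geq r$ is at least the number of points with $\ksi_S(p) \geq r$. The case $r = 0$ is trivial. For $r \geq 1$, property (3) of a valid triple assigns to each node $v$ at height $r$ a unique leaf $p_v$ in its subtree with $\ksi_S(p_v) \geq r$; since the subtrees of distinct height-$r$ nodes are disjoint, and since property (2) of a valid triple forces every leaf $p$ with $\ksi_S(p) \geq r$ to have a well-defined ancestor at height $r$, this yields a bijection. Writing $N_r$ for the number of nodes of $F$ at height $r$, we obtain $|\{p : \ksi_S(p) \geq r\}| = N_r$. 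An entirely analogous bijection, built using property (4) of a valid triple, identifies $\{p : \ksi_G(p) \geq r\}$ with the set of edges $(u,v)$ in $F$ for which $h_F(v) = r$. Because every height-$r$ node of a leveled forest is internal and therefore has at least one child, this edge count is at least $N_r$, yielding the desired inequality.

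For condition (3), given $p \in P$ and an integer $r \geq 0$ with $\ksi_G(p) \geq r$, the case $r = 0$ is handled by taking $q = p$. For $r \geq 1$, property (2) of a valid triple forces the root of $p$'s tree to have height at least $r$, so $p$ has a well-defined ancestor $v \in V(F)$ with $h_F(v) = r$. Applying property (3) of the valid triple to $v$ produces a leaf $q := p_v$ in the subtree of $v$ with $\ksi_S(q) \geq h_F(v) = r$. Since $\ksi_G(p) \geq r = h_F(v)$ and $v$ is a common ancestor of $p$ and $q$ in $F$, \cref{lem:valid_triple_implies_small_distance} applies and gives $d(p,q) \leq 4 \cdot 2^{h_F(v)} - 4 \leq 4 \cdot 2^r$, which is exactly what condition (3) requires.

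The only mild obstacle will be the bookkeeping for condition (2), namely formally checking that the assignments given by properties (3) and (4) of a valid triple really produce bijections at a given height. Both directions rely on the uniqueness clauses of these properties combined with the fact that any non-root vertex of a forest has a unique parent, so that the witnesses from different subtrees (respectively, different edges into a common parent) cannot collide. Once this bijective reading is pinned down, the rest of the argument is entirely mechanical, with condition (3) following immediately from \cref{lem:valid_triple_implies_small_distance}.
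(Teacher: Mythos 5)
Your verification of conditions (1) and (3) matches the paper's essentially verbatim: for (1) it is immediate, and for (3) both you and the paper walk up from $p$ to the height-$r$ ancestor $v$, extract the smooth-rank witness $q = p_v$ from property (3) of the valid triple, and invoke \cref{lem:valid_triple_implies_small_distance}.

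For condition (2) you take a genuinely different route. The paper constructs an explicit injection directly from $\{p : \ksi_S(p) \geq \ksi_S^*(i)\}$ into $\{p' : \ksi_G(p') \geq \ksi_S^*(i)\}$: send each $p$ to its ancestor $v$ at height $\ksi_S^*(i)$ and then to any descendant of $v$ with large $\ksi_G$ (existence from property (4), injectivity from the uniqueness clause in property (3)). You instead count level by level, observing that property (3) puts the set $\{p : \ksi_S(p) \geq r\}$ in bijection with the height-$r$ nodes of $F$, property (4) puts $\{p : \ksi_G(p) \geq r\}$ in bijection with the edges entering a height-$r$ node, and every internal node has at least one incoming edge. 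Both proofs lean on the same structural facts; the paper's map is slightly more parsimonious (it never needs the ``at least one child'' observation, since it only produces one witness per source point rather than accounting for all witnesses), while your counting version makes the combinatorial content — nodes versus edges at a fixed level — more transparent and yields the stronger pointwise statement that $|\{p : \ksi_G(p) \geq r\}| \geq |\{p : \ksi_S(p) \geq r\}|$ for every $r$. Both are correct.
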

\begin{proof}
By definition, $\ksi_G$ satisfies both the separation and the maximality property.

To show that the second property of a valid tuple holds, consider an arbitrary $i \in \{1,2,\ldots,|P|\}$. We have to show that $\ksi^*_G(i) \geq \ksi^*_S(i)$. First, note that is suffices to give an injective mapping that maps each point $p \in P$ with $\ksi_S(p) \geq \ksi^*_S(i)$ to a point $p' \in P$ with $\ksi_G(p') \geq \ksi^*_S(i)$. Now, consider some $p \in P$ with $\ksi_S(p) \geq \ksi^*_S(i)$. Let $v$ be the ancestor of $p$ in $F$ with $h_F(v) = \ksi^*_S(i)$. The node $v$ is well-defined because of the second property of a valid triple. 
We now map $p$ to an arbitrary descendant $p'$ of $v$ with $\ksi_G(p') \geq h_F(v) = \ksi^*_S(i)$. Such a descendant is guaranteed to exist because of the fourth property. The third property, or more concretely that $p$ is the only descendant of $v$ with $\ksi_S(p) \geq h_F(v)$, implies that the mapping is indeed injective. 

To show the third property, consider an arbitrary $r \in \mathbb{N}_0$ and $p \in P$ with $\ksi_G(p) \geq r$. We have to show that there exists a point $q \in P$ with $\ksi_S(q) \geq r$ and $d(p,q) \leq 4 \cdot 2^r$.
Let $v$ be the ancestor of $p$ with $h_F(v) = r$. Let $q$ be the leaf in the subtree of $v$ with $\ksi_S(q) \geq r$, guaranteed by the third property. As $v$ is a common ancestor of $p$ and $q$ and $\ksi_G(p) \geq h_F(v)$, we can use \cref{lem:valid_triple_implies_small_distance} to conclude that $d(p,q) \leq 4 \cdot 2^{h_F(v)} - 4 \leq 4 \cdot 2^r$, which finishes the proof.
\end{proof}

\subsection{Algorithm}\label{sec:algorithm}

This section is devoted to describing our algorithm we run whenever a new point is inserted (\cref{alg:insert}) and whenever a point of the current solution is deleted (\cref{alg:delete}). These algorithms formalize the algorithm sketched in \cref{fig:forest,fig:forest2}, making precise what we do in certain special cases. 

\subsubsection{Insertion}

In this section we prove \cref{lem:insertion} that shows that we can maintain a valid triple (i.e., a leveled forest and its decomposition) after a point is inserted. 

\begin{algorithm}
\caption{Insert Operation}
\label{alg:insert}
\textbf{Input:} $P \subseteq \mathcal{X}$ for some metric space $(\mathcal{X},d)$ and the minimum pairwise distance in $P$ is at least $1$\\
$(F,\ksi_G,\ksi_S)$ is valid triple for $P$ and $\ksi_G$ additionally satisfies the maximality property \\
$q \in \mathcal{X}$ is an arbitrary point satisfying $d(q,p) \geq 1$ for every $p \in P$\\
$\Delta \in \mathbb{N}$ is an upper bound on the maximum pairwise distance between any two points in $P \cup \{q\}$ \\
If $P \neq \emptyset$, then there exists $p \in P$ with $\ksi_G(p) \geq \lceil \log(\Delta) + 1\rceil$ \\
\textbf{Output:} $(F', \ksi'_G, \ksi'_S)$ is a valid triple for $P \cup \{q\}$, $\ksi'_G$ satisfies the maximality property, there exists $p \in P \cup \{q\}$ with $\ksi'_G(p) \geq \lceil\log(\Delta) + 1\rceil$ and $\ksi'_S(p) = \ksi_S(p)$ for every $p \in P$.

\begin{algorithmic}[1]
\Procedure{Insert}{$q$,$P$,$F$,$\ksi_G,\ksi_S$,$\Delta$} 
    \State $i_q \leftarrow \max \{i \in \{0,1,\ldots, \lceil \log(\Delta) + 1 \rceil\}\mid d(q,p) \geq 2^{\min(i,\ksi_G(p))} \text{ for every $p \in P$}\}$
    \State Obtain $F'$ from $F$ by first adding $q$ as a new leaf, and then attaching a path of length $i_q$ to $q$ with edges oriented away from $q$.
    \State $\ksi'_G(q) = \ksi'_S(q) = i_q$ 
    \State $\ksi'_G(p) = \ksi_G(p)$ and $\ksi'_S(p) = \ksi_S(p)$ for every $p \in P$ 
    \State \Return $(F', \ksi'_G, \ksi'_S)$
\EndProcedure
\end{algorithmic}
\end{algorithm}

\begin{lemma}[Insertion Lemma]
    \label{lem:insertion}
    Let $(\mathcal{X},d)$ be a metric space and $P \subseteq \mathcal{X}$ be a point set with minimum pairwise distance at least $1$. 
    Let $(F,\ksi_G,\ksi_S)$ be a valid triple for point set $P$ and $\ksi_G$ additionally satisfies the maximality property. Let $q \in \mathcal{X}$ with $d(q,p) \geq 1$ for every $p \in P$. Let $\Delta \in \mathbb{N}$ be an upper bound on the maximum pairwise distance between any two points in $P \cup \{q\}$. If $P \neq \emptyset$, then we additionally assume that there exists a point $p \in P$ with $\ksi_G(p) = \lceil \log(\Delta) + 1 \rceil$. 
    Let $(F',\ksi'_G,\ksi'_S) \leftarrow \texttt{INSERT}(q,P,F,\ksi_G,\ksi_S,\Delta)$ (\cref{alg:insert}). Then, $(F',\ksi'_G,\ksi'_S)$ is a valid triple for $P \cup \{q\}$, $\ksi'_G$ satisfies the maximality property, there exists $p \in P \cup \{q\}$ with $\ksi'_G(p) \geq \lceil \log(\Delta) + 1 \rceil$ and $\ksi'_S(p) = \ksi_S(p)$ for every $p \in P$.
\end{lemma}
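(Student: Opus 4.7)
The plan is to verify each output guarantee of the lemma in turn, exploiting the fact that \cref{alg:insert} attaches to $F$ a brand-new rooted tree consisting of the leaf $q$ together with a path of length $i_q$ above it. Consequently $F'$ contains $F$ as a subforest, every node of $F$ has the same subtree in $F'$ as in $F$, and every new node on the $q$-path has only $q$ as a leaf descendant. This structural observation makes most of the bookkeeping routine.

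To verify that $(F',\ksi'_G,\ksi'_S)$ is a valid triple (\cref{def:valid_triple}): separation for $\ksi'_G$ on pairs inside $P$ is inherited from $\ksi_G$, and on a pair $(q,p)$ with $p\in P$ it is precisely the condition $d(q,p)\ge 2^{\min(i_q,\ksi_G(p))}$ enforced by the definition of $i_q$; the path-length property holds for existing points by induction and for $q$ because its root-path has length $i_q=\ksi'_G(q)=\ksi'_S(q)$; conditions 3, 4 and 5 hold for nodes already in $F$ since their subtrees are preserved, and hold for each new node $v$ on the $q$-path because its only leaf descendant is $q$ with $\ksi'_G(q)=\ksi'_S(q)=i_q\ge h_{F'}(v)$, which makes condition 5 vacuous. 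Preservation of $\ksi_S$ on $P$ is by construction. The existence of a point with $\ksi'_G\ge\lceil\log\Delta+1\rceil$ is inherited from the input when $P\ne\emptyset$; when $P=\emptyset$, the universally quantified constraint in the definition of $i_q$ is vacuous, so $i_q=\lceil\log\Delta+1\rceil$ and $q$ itself witnesses the claim.

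The main obstacle is maintaining the maximality of $\ksi'_G$. For $p\in P$ whose maximality was previously witnessed by some $p'\in P$ with $\ksi_G(p')>\ksi_G(p)$ and $d(p,p')<2^{\ksi_G(p)+1}$, the same witness persists since ranks on $P$ are unchanged. For $p\in P$ that was the unique maximum with respect to $\ksi_G$, the input hypothesis forces some point in $P$ to have rank at least $\lceil\log\Delta+1\rceil$, and uniqueness of the maximum forces that point to be $p$; since $2^{\ksi_G(p)}>\Delta\ge d(q,p)$, the defining constraint of $i_q$ fails at $i=\ksi_G(p)$, giving $i_q<\ksi_G(p)$, so $p$ remains the unique maximum in $P\cup\{q\}$. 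Finally, for $q$: if $P=\emptyset$ it is trivially the unique maximum; otherwise $i_q<\lceil\log\Delta+1\rceil$ by the previous argument, so $i_q+1$ lies in the feasible range over which the defining maximum is taken, and maximality of $i_q$ yields some $p'\in P$ with $d(q,p')<2^{\min(i_q+1,\ksi_G(p'))}$. Were $\ksi_G(p')\le i_q$, this would contradict the defining property of $i_q$ applied at $i=i_q$; hence $\ksi_G(p')\ge i_q+1>\ksi'_G(q)$ and $d(q,p')<2^{i_q+1}=2^{\ksi'_G(q)+1}$, providing the required witness for $q$.
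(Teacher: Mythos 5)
Your proof is correct and follows essentially the same approach as the paper's: you establish separation from the definition of $i_q$, observe that $i_q < \lceil\log\Delta+1\rceil$ when $P\neq\emptyset$ (because any point of rank $\ge\lceil\log\Delta+1\rceil$ in $P$ is at distance at most $\Delta < 2^{\lceil\log\Delta+1\rceil}$ from $q$, making the constraint fail at the top of the range), use this to locate a witness $p'$ for $q$'s maximality from the failure of the constraint at $i_q+1$, and separately handle the case where some $p\in P$ was the unique maximum. You are slightly more explicit than the paper in checking conditions 3--5 of the valid-triple definition node-by-node on the new path (the paper dismisses these as "straightforward"), but the underlying reasoning and the structure of the case analysis are identical.
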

\begin{proof}
First, note that we assume that $d(q,p) \geq 1 = 2^{\min(0,\ksi_G(p))}$ for every $p \in P$.
Therefore, $i_q$ is well-defined. We first check that $\ksi_G'$ satisfies the separation property.
To check the separation property, consider two distinct points $p_1,p_2 \in P \cup \{q\}$.
We have to show that $d(p_1,p_2) \geq 2^{\min(\ksi'_G(p_1),\ksi'_G(p_2))}$.
If both $p_1$ and $p_2$ are in $P$, then this follows from $\ksi'_G(p) = \ksi_G(p)$ for every $p \in P$ and our assumption that $\ksi_G$ satisfies the separation property.
If either $p_1 = q$ or $p_2 = q$, then this directly follows from how $i_q$ is defined and that $\ksi'_G(q) = i_q$. 

We next verify that $\ksi'_G$ satisfies the maximality property. Consider an arbitrary $p \in P \cup \{q\}$.
We have to show that either $\ksi'_G(p) > \ksi'_G(p')$ for every $p' \in (P \cup \{q\}) \setminus \{p\}$ or there exists some $p' \in P \cup \{q\}$ with $\ksi'_G(p') > \ksi'_G(p)$ and $d(p,p') < 2^{\ksi'_G(p) + 1}$. 
First, if $P = \emptyset$, then $\ksi'_G$ clearly satisfies the maximality property.
It thus remains to consider the case $P \neq \emptyset$.
We first check that the newly inserted point $q$ satisfies the maximality property. First, note that $i_q < \lceil \log(\Delta) + 1 \rceil$.
This follows from our assumption that there exists a point $p \in P$ with $\ksi_G(p) \geq \lceil \log(\Delta) + 1\rceil$ and therefore $d(q,p) \leq \Delta < 2^{\min(\lceil \log(\Delta) + 1\rceil,\ksi_G(p))}$.
Thus, from the way $i_q$ is defined, it follows that there exists a point $p \in P$ with $d(q,p) < 2^{\min(i_q + 1,\ksi_G(p))} \leq 2^{\ksi'_G(q) + 1}$ and $d(q,p) \geq 2^{\min(i_q,\ksi_G(p))}$.
In particular, $\ksi_G(p) > i_q$, which directly implies $\ksi'_G(p) > \ksi'_G(q)$, and therefore the maximality property holds for $q$.
Next, we check that the maximality property also holds for an arbitrary $p \in P$.
First, consider the case that $\ksi_G(p) > \ksi_G(p')$ for every $p' \in P \setminus \{p\}$.
Then, by our assumption we get that $\ksi_G(p) \geq \lceil \log(\Delta) + 1 \rceil$.
We have argued above that $\ksi'_G(q) < \lceil \log(\Delta) + 1 \rceil$ and therefore $\ksi'_G(p) > \ksi'_G(p')$ for every $p' \in (P \cup \{q\}) \setminus \{p\}$.
Thus, it remains to check the case when there exists $p' \in P \setminus \{p\}$ with $\ksi_G(p) \leq \ksi_G(p')$.
As $\ksi_G$ satisfies the maximality property, it follows that there exists $p' \in P$ with $\ksi_G(p') > \ksi_G(p)$ and $d(p,p') < 2^{\ksi_G(p) + 1}$.
As $\ksi'_G(p) = \ksi_G(p)$ for every $p \in P$, we can thus conclude that $\ksi'_G$ indeed satisfies the maximality property.

We also have to verify that there exists $p \in P \cup \{q\}$ with $\ksi'_G(p) \geq \lceil \log(\Delta) + 1\rceil$.
If $P \neq \emptyset$, then this follows from our assumption that there exists $p \in P$ with $\ksi'_G(p) = \ksi_G(p) \geq \lceil \log(\Delta) + 1\rceil$
. If $P = \emptyset$, then it follows from the way $i_q$ is defined that $\ksi'_G(q) = \lceil \log(\Delta) + 1\rceil$. 
The fact that $\ksi'_S(p) = \ksi_S(p)$ for every $p \in P$ follows by definition.

It thus remains to verify that $(F',\ksi'_G,\ksi'_S)$ is a valid triple for $P \cup \{q\}$.
We have already verified above that $\ksi'_G$ satisfies the separation property.
All other properties straightforwardly follow from our assumption that $(F,\ksi_G,\ksi_S)$ is a valid triple for $P$ and the way $F',\ksi'_G$ and $\ksi'_S$ are derived from $F,\ksi_G$ and $\ksi_S$.
\end{proof}

\subsubsection{Deletion}

In this section, we show how we maintain the leveled forest during deletion. This section formalizes the algorithm sketched in \cref{fig:forest,fig:forest2}. 

\paragraph{RankDecrease operation}
When a point is deleted, we implement this deletion by repeatedly decrementing its rank by one. After each decrement, we want to keep a valid triple (i.e., leveled forest and its decomposition). This is done in \cref{alg:rankdecrease} and in \cref{lem:rankdecrease}. Intuitively, \cref{alg:rankdecrease} corresponds to one change in the left picture in \cref{fig:forest2}.

\begin{algorithm}
\caption{RankDecrease Operation}
\label{alg:rankdecrease}
\textbf{Input:} $P \subseteq \mathcal{X}$ for some metric space $(\mathcal{X},d)$\\
$(F,\ksi_G,\ksi_S)$ is a valid triple for $P$ \\
$q$ is a point in $P$ with $\ksi_G(q) \geq 1$ \\
\textbf{Output:} $(F', \ksi'_G, \ksi'_S)$ is a valid triple for point set $P$ with $\ksi'_G(q) = \ksi_G(q) - 1$.
\begin{algorithmic}[1]
\Procedure{RankDecrease}{$q$,$P$,$F$,$\ksi_G$,$\ksi_S$}
\State $\ksi'_G(p) =
\begin{cases}
    \ksi_G(p), & \text{if } p \in P  \setminus \{q\} \\
    \ksi_G(q) - 1, & \text{if } p = q
\end{cases}$
\State Let $u$ and $v$ be the ancestors of $q$ with $h_F(u) = \ksi_G(q) - 1$ and $h_F(v) = \ksi_G(q)$, respectively.
\State Let $p_v$ be the leaf in the subtree of $v$ with $\ksi_S(p_v) \geq h_F(v)$.
\If{$u$ is the only child of $v$ in $F$}
 \State We obtain $F'$ from $F$ by removing both $v$ and $(u,v)$.
  \State $\ksi'_S(p) =
\begin{cases}
    \ksi_S(p), & \text{if } p \in P  \setminus \{p_v\} \\
    h_F(u), & \text{if } p = p_v
\end{cases}$
 \Else
 \State Let $u'$ be an arbitrary child of $v$ not equal to $u$.
 \State We obtain $F'$ from $F$ by removing $(u,v)$.

\If{$p_v$ is in the subtree of $u$}

\State Let $p_{u'}$ be the leaf in the subtree of $u'$ with $\ksi_S(p_{u'}) \geq h_F(u')$.
 \State $\ksi'_S(p) =
\begin{cases}
    \ksi_S(p), & \text{if } p \in P  \setminus \{p_v,p_{u'}\} \\
    h_F(u), &  \text{if } p = p_v \\
    \ksi_S(p_v), & \text{if } p = p_{u'}
\end{cases}$
\Else
 \State $\ksi'_S(p) = \ksi_S(p)$ for every $p \in P$.
 \EndIf
\EndIf
\State \Return $(F', \ksi'_G, \ksi'_S)$
\EndProcedure
\end{algorithmic}
\end{algorithm}

\begin{lemma}[RankDecrease Lemma]
    \label{lem:rankdecrease}
    Let $(\mathcal{X},d)$ be a metric space, $P \subseteq \mathcal{X}$ be a point set and $(F,\ksi_G,\ksi_S)$ be a valid triple for $P$. Let $q$ be a point in $P$ with $\ksi_G(q) \geq 1$. Let $(F',\ksi'_G,\ksi'_S) \leftarrow \texttt{RANKDECREASE}(q,P,F,\ksi_G,\ksi_S)$ (\cref{alg:rankdecrease}). Then, $(F',\ksi'_G,\ksi'_S)$ is a valid triple for $P$. Also, the distance between $q$ and its root in $F'$ is $\ksi_G(q) - 1$.
Let $v$ denote the ancestor of $q$ with $h_F(v) = \ksi_G(q)$. 
There exists at most one point $p_{dec} \in P$ satisfying $\ksi'_S(p_{dec}) < \ksi_S(p_{dec})$. Moreover, $p_{dec}$ additionally satisfies $\ksi'_S(p_{dec}) = \ksi_G(q) - 1$ and if $v$ is a root, then also $\ksi_S(p_{dec}) = \ksi_G(q)$.
There also exists at most one point $p_{inc} \in P$ satisfying $\ksi'_S(p_{inc}) > \ksi_S(p_{inc})$. Moreover, $p_{inc}$ additionally satisfies $\ksi_S(p_{inc}) = \ksi_G(q) - 1$ and if $v$ is a root, then also $\ksi'_S(p_{inc}) = \ksi_G(q)$.
\end{lemma}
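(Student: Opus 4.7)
The plan is to split into Case 1 ($u$ is the only child of $v$) and Case 2 ($v$ has at least one other child), verify the five conditions of \cref{def:valid_triple}, and then identify the (at most one) points $p_{dec}$ and $p_{inc}$.

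The crucial preliminary observation I would establish first is that in Case 1 the node $v$ is necessarily a root of its tree in $F$. Suppose not: then $v$ has a parent $w_1$, and Property~4 applied to $(v,w_1)$ produces a leaf $p_{vw_1}$ in the subtree of $v = \{v\} \cup \text{subtree}(u)$ with $\ksi_G(p_{vw_1}) \geq h_F(w_1) > h_F(v)$. Property~4 applied to $(u,v)$ then forces $p_{vw_1} = q$ by uniqueness, contradicting $\ksi_G(q) = h_F(v) < h_F(w_1)$. This observation guarantees that removing $v$ and $(u,v)$ in Case 1 is a purely local operation (only $u$'s subtree detaches and becomes its own tree; nothing else in $F$ is affected).

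For the valid-triple check, Property~1 (separation) is immediate in both cases since we only \emph{lower} one value of $\ksi_G$. The remaining properties I would check by dividing the nodes into three groups: nodes in the new subtree rooted at $u$, the node $v$ itself and $u'$ in Case 2, and (in Case 2) the ancestors of $v$. For Property~2 I would use Property~4 of the original triple to argue that $q$ is the unique leaf in $\text{subtree}(u)$ with $\ksi_G \geq h_F(v)$, so after its rank drops to $h_F(u)$, all $\ksi'_G$-values in $\text{subtree}(u)$ are at most $h_F(u)$, the new distance to the root $u$; a parallel argument using Property~3 handles $\ksi'_S$. Property~4 holds for all remaining edges because the only $\ksi_G$-change is at $q$ inside $\text{subtree}(u)$; for an edge $(v,w_1)$ above $v$ in Case 2 the corresponding leaf $p_{vw_1}$ cannot lie in $\text{subtree}(u)$ (same argument as in the preliminary observation), so it remains available. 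Property~5 follows because the set of leaves in each remaining subtree shrinks or stays the same, and all $\ksi_G$-values can only decrease.

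The heart of the argument is Property~3 in Case~2. In sub-case~2a ($p_v \in \text{subtree}(u)$), after we reset $\ksi'_S(p_v) = h_F(u)$ and $\ksi'_S(p_{u'}) = \ksi_S(p_v)$, I would verify that $p_{u'}$ now plays the role of ``the unique high-$\ksi_S$ leaf'' for $v$ and for every ancestor $w$ of $v$ whose old $p_w$ was equal to $p_v$; for any ancestor whose $p_w \neq p_v$, the original $p_w$ remains in $\text{subtree}(w)$ because $p_w = p_v$ would follow from uniqueness in $\text{subtree}(v)$, contradicting $p_w \neq p_v$. In sub-case~2b ($p_v \notin \text{subtree}(u)$) nothing is changed and the same uniqueness argument shows that $p_w \notin \text{subtree}(u)$ for every ancestor $w$, so all $p_w$ remain in their (shrunken) subtrees. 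For the $p_{dec}/p_{inc}$ claims I would argue: $p_{dec} = p_v$ with $\ksi'_S(p_{dec}) = h_F(u) = \ksi_G(q)-1$; in sub-case~2a, $p_{inc} = p_{u'}$, and the equality $\ksi_S(p_{u'}) = h_F(u') = \ksi_G(q)-1$ is forced because if $\ksi_S(p_{u'}) \geq h_F(v)$, then both $p_{u'}$ and $p_v$ would qualify for $v$ under Property~3. The ``if $v$ is a root'' refinements both follow from Property~2 applied in $F$, which gives $\ksi_S(p_v) \leq h_F(v) = \ksi_G(q)$, hence exactly $\ksi_G(q)$. Finally, the distance from $q$ to its root in $F'$ equals $h_F(u) = \ksi_G(q)-1$ because $u$ is the new root of $q$'s tree.

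The main obstacle is the careful bookkeeping of Property~3 across ancestors of $v$ in Case~2: one must argue simultaneously that $p_{u'}$ correctly takes over the role of $p_v$ in exactly those ancestors where the role was being played by $p_v$, and is prevented from interfering at any other ancestor. All other steps are local checks driven by the observation that only $\text{subtree}(u)$ is modified and the only rank change in $\ksi_G$ is at $q$.
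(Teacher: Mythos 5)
Your proof plan is correct and mirrors the paper's proof essentially step by step: the same preliminary observation that $v$ must be a root when $u$ is its only child (derived from Property~4 and uniqueness), the same two-level case split (only-child vs.\ multiple children, and within the latter, whether $p_v$ lies in the subtree of $u$), the same local verification of each defining property of a valid triple, and the same identification of $p_{dec}=p_v$ and $p_{inc}=p_{u'}$ with the same derivation of the equality $\ksi_S(p_{u'})=\ksi_G(q)-1$ from the uniqueness of the $\ksi_S$-witness at $v$. The argument for the ``if $v$ is a root'' refinements via Property~2 is exactly the one the paper uses.
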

\begin{proof}
We first show that the algorithm is well-defined.
As $(F,\ksi_G,\ksi_S)$ is a valid triple for $P$ and $q \in P$, we know that $q$ is a leaf in $P$.
Moreover, the value of $\ksi_G(q)$ is at most the length of the path from $q$ to its root.
Therefore, $q$ has indeed exactly one ancestor in $F$ of distance $\ksi_G(q)$ and one ancestor in $F$ of distance $\ksi_G(q) - 1$.
Thus, it follows from the definition of the height function $h_F$ that $u$ and $v$ are well-defined and also that $u$ is a child of $v$. 
Moreover, as $(F,\ksi_G,\ksi_S)$ is a valid triple for $P$, the third condition gives that for each node $w$ in the forest $F$, there is exactly one leaf $p_w$ in its subtree such that $\ksi_S(p_w) \geq h_F(w)$.
Therefore, both $p_v$ and $p_{u'}$ are well-defined.

\textbf{$F'$ is a leveled forest:}
We next show that $F'$ is a leveled forest for $P$ and that for every vertex $v'$ in $F'$, it holds that $h_{F'}(v') = h_F(v')$.
If $v$ has at least two children in $F$, then this follows immediately.
Thus, it remains to consider the case that $u$ is the only child of $v$. In that case, $v$ is a root in $F$.
To see why, assume that $v$ is not a root and therefore has a parent $w$ in $F$.
As $(F,\ksi_G,\ksi_S)$ is a valid triple, this would imply that there exists a leaf $p_{vw}$ in the subtree of $v$ with $\ksi_G(p_{vw}) \geq h_F(w) > h_F(v) = \ksi_G(q)$.
In particular, this would imply that $p_{vw}$ and $q$ are two distinct leaves in the subtree of $u$ with $\ksi_G(p_{vw}) \geq h_F(v)$ and $\ksi_G(q) \geq h_F(v)$, which contradicts the fact that $(F,\ksi_G,\ksi_S)$ is a valid triple.
Thus, if $u$ is the only child of $v$, then $v$ is indeed a root and as we obtain $F'$ from $F$ by removing both $v$ and $(u,v)$, we get that the set of leaves in $F$ and $F'$ are the same and that for each node $w$ in $F'$, the subtree of $w$ in $F$ is the same as the subtree of $w$ in $F'$.
Thus, $F$' is indeed a leveled forest for $P$ and for every vertex $v'$ in $F'$, it holds that $h_{F'}(v') = h_F(v')$.

\textbf{All properties involving $\ksi'_G$ are satisfied:}
We next verify that all properties involving $\ksi'_G$ are satisfied. 

It directly follows from the way $\ksi'_G$ is derived from $\ksi_G$ and our assumption that $\ksi_G(q) \geq 1$ that $\ksi'_G$ is a valid rank function for $P$.
Moreover, as $\ksi_G$ satisfies the separation property and $\ksi'_G(p) \leq \ksi_G(p)$ for every $p \in P$, it follows that $\ksi'_G$ also satisfies the separation property.

We next check that for every point $p \in P$, $\ksi'_G(p)$ is at most the length of the path from $p$ to its root in $F'$.
First, consider the case that $p$ is a leaf in the subtree of $u$.
As $u$ is a root in $F'$, we have to show that $\ksi'_G(p) \leq h_F(u)$.
If $p = q$, this directly follows from $\ksi'_G(q) = \ksi_G(q) - 1$ and $h_F(u) = \ksi_G(q) - 1$. If $p \neq q$, we have $\ksi'_G(p) = \ksi_G(p) \leq h_F(u)$.
The last inequality follows as $\ksi_G(q) \geq h_F(v)$, and as $(F,\ksi_G,\ksi_S)$ is a valid triple for $P$, every other point $p$ in the subtree of $u$ has to satisfy $\ksi_G(p) < h_F(v)$ and therefore $\ksi_G(p) \leq h_F(u)$. 
Thus, it remains to show that $\ksi'_G(p)$ is at most the length of the path from $p$ to its root in $F'$ if $p$ is not in the subtree of $u$. This follows from the fact that $\ksi'_G(p) = \ksi_G(p)$, the fact that the path from $p$ to its root is the same in both $F$ and $F'$ and that $\ksi_G(p)$ is at most the length of the path from $p$ to its root in $F$.

Next, we verify that for every edge $(u',v')$ in $F'$, where $v'$ is the parent of $u'$, there is exactly one leaf $p'_{u'v'}$ in the subtree of $u'$ in $F'$ such that $\ksi'_G(p'_{u'v'}) \geq h_{F'}(v')$. As $(F,\ksi_G,\ksi_S)$ is a valid triple, we get the guarantee that there is exactly one leaf $p_{u'v'}$ in the subtree of $u'$ in $F$ such that $\ksi_G(p_{u'v'}) \geq h_F(v') = h_{F'}(v')$.
Therefore, it suffices to show that for every edge $(u',v')$ in $F$' and $p \in P$, $p$ is in the subtree of $u'$ in $F$ and $\ksi_G(p) \geq h_F(v')$ if and only if $p$ is in the subtree of $u'$ in $F'$ and $\ksi'_G(p) \geq h_F(v')$.
The condition clearly holds for every $p \in P$ with $\ksi'_G(p) = \ksi_G(p)$ and that has the same path to its root in $F$ and $F'$.
Thus, it remains to consider the case that $p$ is in the subtree of $u$.
If $p = q$, it is easy to verify that both conditions hold if and only if $(u',v')$ is on the path from $q$ to $u$. If $p \neq q$ is in the subtree of $u$, then $\ksi'_G(p) = \ksi_G(p) \leq h_F(u)$ and thus it is again easy to verify that the condition holds.

It remains to verify that for a given $v'$ in $F'$ and two distinct leaves $p_1$ and $p_2$ in the subtree of $v'$ in $F'$ with $\min(\ksi'_G(p_1),\ksi'_G(p_2)) \geq h_{F'}(v')$, it holds that $d(p_1,p_2) < 2^{h_{F'}(v') + 1}$.
We have shown above that $p_1$ and $p_2$ are also two distinct leaves in the subtree of $v'$ in $F$. Moreover,

\[\min(\ksi_G(p_1),\ksi_G(p_2)) \geq \min(\ksi'_G(p_1),\ksi'_G(p_2)) \geq h_{F'}(v') = h_F(v').\]
Therefore, as $(F,\ksi_G,\ksi_S)$ is a valid triple for $P$, it follows that $d(p_1,p_2) < 2^{h_F(v') + 1} = 2^{h_{F'}(v') + 1}$, as needed.

\textbf{All properties involving $\ksi'_S$ are satisfied if $u$ is the only child of $v$:}
First, we verify that for every point $p$ in $P$, $\ksi'_S(p)$ is at most the length of the path from $p$ to its root in $F'$. If $v$ is not an ancestor of $p$ in $F$, then this follows immediately from the fact that $\ksi'_S(p) = \ksi_S(p)$, the fact that the path from $p$ to its root in $F'$ and $F$ is the same and $\ksi_S(p)$ is at most the length of the path form $p$ to its root in $F'$. 
Thus, it remains to consider the case that $p$ is an ancestor of $v$ and as $u$ is the only child of $v$, also of $u$.
It suffices to show that $\ksi'_S(p) \leq h_{F'}(u) = h_F(u)$.
If $p = p_v$, this follows by definition. If $p \neq p_v$, then $\ksi_S(p) < h_F(v)$, as otherwise $p$ and $p_v$ would be two distinct leaves in the subtree of $v$ in $F$ with $\ksi_S(p) \geq h_F(v)$ and $\ksi_S(p_v) \geq h_F(v)$, which would contradict our assumption that $(F,\ksi_G,\ksi_S)$ is a valid triple for $P$.
Therefore, we have $\ksi'_S(p) = \ksi_S(p) \leq h_F(v) -1 = h_F(u)$, as needed.

Next, we show that for every node $v'$ in the forest $F'$, there is exactly one leaf $p'_{v'}$ in its subtree in $F'$ such that $\ksi'_S(p'_{v'}) \geq h_{F'}(v') = h_F(v')$. As $(F,\ksi_G,\ksi_S)$ is a valid triple, we get the guarantee that there is exactly one leaf $p_{v'}$ in the subtree of $v'$ in $F$ such that $\ksi_S(p_{v'}) \geq h_F(v') = h_{F'}(v')$.
Therefore, it suffices to show that for every node $v'$ in $F$' and $p \in P$, $p$ is in the subtree of $v'$ in $F$ and $\ksi_S(p) \geq h_F(v')$ if and only if $p$ is in the subtree of $v'$ in $F'$ and $\ksi'_S(p) \geq h_F(v')$.
The condition clearly holds for every $p \in P$ with $\ksi'_S(p) = \ksi_S(p)$ and that has the same path to its root in $F$ and $F'$.
Thus, it remains to consider the case that $p$ is in the subtree of $u$.
If $p = p_v$, it is easy to verify that both conditions hold if and only if $v'$ is on the path from $p_v$ to $u$. If $p \neq p_v$ is in the subtree of $u$, then $\ksi'_S(p) = \ksi_S(p) \leq h_F(u)$ and thus it is again easy to verify that the condition holds. 

Finally, note that $p_v$ is the only point $p$ in $P$ satisfying $\ksi_S'(p) < \ksi_S(p)$ and $p_v$ indeed satisfies $\ksi_S(p_v) = \ksi_G(q)$ and $\ksi'_S(p_v) = \ksi_G(q) - 1$. Moreover, there is no point $p$ in $P$ satisfying $\ksi'_S(p) > \ksi_S(p)$.

\textbf{All properties involving $\ksi'_S$ are satisfied if $v$ has at least two children and $p_v$ is in the subtree of $u$: }
We first verify that for every point $p \in P$, $\ksi_S'(p)$ is at most the length of the path from $p$ to its root in $F'$.
First, consider the case that $p$ is in the subtree of $u$.
In that case, we have to show that $\ksi_S'(p) \leq h_{F'}(u) = h_F(u)$. If $p = p_v$, then this follows by definition.
If $p \neq p_v$, then $\ksi_S'(p) = \ksi_S(p) < h_{F}(u)$, as otherwise $p$ and $p_v$ would be two distinct leaves in the subtree of $u$ in $F$ with $\ksi_S(p) \geq h_F(u)$ and $\ksi_S(p_v) \geq h_F(u)$.
It remains to consider the case that $p$ is not in the subtree of $v$. In that case, the path from $p$ to its root is the same in $F$ and $F'$. Thus, the only interesting case is $p = p_{u'}$, as otherwise $\ksi'_S(p) = \ksi_S(p)$. Let $w$ be the ancestor of $p_v$ of distance $\ksi_S(p_v) \geq h_F(v)$ in $F$. As $w$ is an ancestor of $v$ and $v$ is an ancestor of $p_{u'}$, it follows that $w$ is an ancestor of $p_{u'}$. Moroever, as $F$ is a leveled forest, $p_{u'}$ has the same distance to $w$ as $p_v$. Thus, $w$ is an ancestor of $p_{u'}$ in $F'$ of distance $\ksi_S(p_v) :=\ksi'_S(p_{u'})$, as needed.

Next, we show that for every node $v'$ in the forest $F'$, there is exactly one leaf $p'_{v'}$ in its subtree in $F'$ such that $\ksi'_S(p'_{v'}) \geq h_{F'}(v') = h_F(v')$. 
First, consider the case that $v'$ is an ancestor of both $p_{v}$ and $p_{u'}$ in $F$ (and therefore of $v$) satisfying $h_F(v) \leq h_F(v') \leq \ksi_S(p_v)$.
As $(F,\ksi_G,\ksi_S)$ is a valid triple, $p_v$ is the only leaf in the subtree of $v'$ in $F$ satisfying $\ksi_S(p_v) \geq h_F(v')$. As $\ksi'_S(p) = \ksi_S(p)$ for every $p \in P \setminus \{p_v,p_{u'}\}$, this implies that if $p$ is a leaf in the subtree of $v'$ in $F'$ not equal to $p_v$ or $p_{u'}$, then $\ksi_S'(p) < h_{F'}(v')$. As $\ksi_s'(p_v) = h_F(u) < h_{F'}(v')$, it follows that $p_{u'}$ is the only leaf in the subtree of $v'$ in $F'$ satisfying $\ksi_S'(p_{u'}) = \ksi_S(p_v) \geq h_{F'}(v')$, as desired. Thus, it remains to consider the case that $v'$ is not an ancestor of both $p_v$ and $p_{u'}$ in $F$ satisfying $h_F(v) \leq h_F(v') \leq \ksi_S(p_v)$. As $(F,\ksi_G,\ksi_S)$ is a valid triple, we get the guarantee that there is exactly one leaf $p_{v'}$ in the subtree of $v'$ in $F$ such that $\ksi_S(p_{v'}) \geq h_F(v') = h_{F'}(v')$. Thus, it is sufficient to show that for every $p \in P$, $p$ is in the subtree of $v'$ in $F$ and $\ksi_S(p) \geq h_F(v')$ if and only if $p$ is in the subtree of $v'$ in $F'$ and $\ksi'_S(p) \geq h_F(v')$. It is simple to verify that this is indeed the case.

Finally, note that $p_v$ is the only point $p$ in $P$ satisfying $\ksi_S'(p) < \ksi_S(p)$ and $p_v$ indeed satisfies $\ksi'_S(p_v) = h_F(u) = \ksi_G(q) - 1$ and if $v$ is a root, then also $\ksi_S(p_v) = h_F(v) =\ksi_G(q)$. Also, $p_{u'}$ is the only point $p$ in $P$ satisfying $\ksi_S'(p_{u'}) > \ksi_S(p_{u'})$ and $p_{u'}$ indeed satisfies $\ksi_S(p_{u'}) = \ksi_G(q) - 1$ and if $v$ is a root, then also $\ksi'_S(p_{u'}) = h_F(v) =\ksi_G(q)$.

\paragraph{All properties involving $\ksi'_S$ are satisfied if $v$ has at least two children and $p_v$ is not in the subtree of $u$: }
If $p_v$ is not in the subtree of $u$, then it follows that for every leaf $p$ in the subtree of $u$, $\ksi_S(p) \leq h_F(u)$. Hence, the fact that for every $p \in P$, $\ksi'_S(p)$ is at most the distance from $p$ to its root in $F'$ directly follows from $\ksi'_S(p) = \ksi_S(p)$ for every $p \in P$ and the fact that $\ksi_S(p)$ is at most the distance from $p$ to its root in $F$.

Also, the fact that for every node $v'$ in the forest $F'$, there is exactly one leaf $p'_{v'}$ in its subtree in $F'$ such that $\ksi'_S(p'_{v'}) \geq h_{F'}(v') = h_F(v')$ directly follows from $\ksi'_S(p) = \ksi_S(p)$ for every $p \in P$ and the fact that for every node $v'$ in the forest $F$, there is exactly one leaf $p_{v'}$ in its subtree in $F'$ such that $\ksi'_S(p_{v'}) \geq h_F(v')$.

Finally, there is no point $p$ in $P$ satisfying $\ksi'_S(p) < \ksi_S(p)$ and also no point $p$ in $P$ satisfying $\ksi'_S(p) > \ksi_S(p)$.

\textbf{Finishing the proof:}
From the properties we verified above, it directly follows that $(F',
\ksi'_G,\ksi'_S)$ satisfies all the guarantees of \cref{lem:rankdecrease}, except for the fact that the distance between $q$ and its root in $F'$ is $\ksi_G(q) -1$, which directly follows from the fact that the root of $q$ in $F'$ is $u$.
\end{proof}

\paragraph{DeleteWithoutMaximality operation}
With the RankDecrease operation at hand, we can now repeatedly apply this operation to formalize the deletion of a point, i.e., the operation in the left and middle pictures of \cref{fig:forest,fig:forest2}. That is, in \cref{alg:del_without_max} we show how the leveled forest is updated after the deletion of a point before we increase the ranks of some other points in the point set. 

\begin{algorithm}
\caption{DeleteWithoutMaximality Operation}
\label{alg:del_without_max}
\textbf{Input:} $P \subseteq \mathcal{X}$ for some metric space $(\mathcal{X},d)$ \\
$(F,\ksi_G,\ksi_S)$ is a valid triple for $P$ \\
$q$ is a point in $P$ \\
\textbf{Output:} $(F',\ksi'_G,\ksi'_S)$ is a valid triple for $P \setminus \{q\}$ with $\ksi'_G(p) = \ksi_G(p)$ for every $p \in P \setminus \{q\}$.

\begin{algorithmic}[1]
\Procedure{DeleteWithoutMaximality}{$q$,$P$,$F$,$\ksi_G$,$\ksi_S$}
\State $j \leftarrow \ksi_G(q)$
\State $(F^{(j)},\ksi^{(j)}_G,\ksi^{(j)}_S) \leftarrow (F,\ksi_G,\ksi_S)$
\For{$i = j-1$ down to $0$}
 \State $(F^{(i)},\ksi^{(i)}_G,\ksi^{(i)}_S) \leftarrow \texttt{RANKDECREASE}(q,P,F^{(i+1)},\ksi_G^{(i+1)},\ksi_S^{(i+1)})$
\EndFor

\State Let $F'$ be the forest we obtain from $F^{(0)}$ by removing $q$

\State Let $\ksi'_G, \ksi'_S$ be equal to $\ksi^{(0)}_G, \ksi^{(0)}_S$ restricted to $P \setminus \{q\}$
\State \Return $(F', \ksi'_G, \ksi'_S)$
\EndProcedure

\end{algorithmic}
\end{algorithm}

\begin{lemma}[DeleteWithoutMaximality Lemma]
    \label{lem:deletewithoutmaximality}
    Let $(\mathcal{X},d)$ be a metric space, $P \subseteq \mathcal{X}$ be a point set and $(F,\ksi_G,\ksi_S)$ be a valid triple for $P$ and $q \in P$. Let $(F',\ksi'_G,\ksi'_S) \leftarrow \texttt{DELETEWITHOUTMAXIMALITY}(q,P,F,\ksi_G,\ksi_S)$ (\cref{alg:del_without_max}).
    Then, $(F',\ksi'_G,\ksi'_S)$ is valid triple for point set $P \setminus \{q\}$ with $\ksi'_G(p) = \ksi_G(p)$ for every $p \in P \setminus \{q\}$.
Moreover, for every $h \in \mathbb{N}$, the number of points $p \in P \setminus \{q\}$ satisfying $\ksi_S(p) < h$ and $\ksi'_S(p) \geq h$ is at most one. Similarly, the number of points $p \in P$ satisfying $\ksi_S(p) \geq h$ and either $q = p$ or $\ksi'_S(p) < h$ is also at most one. 
\end{lemma}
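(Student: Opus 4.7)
The approach is to iterate the \texttt{RANKDECREASE} procedure on $q$ exactly $j := \ksi_G(q)$ times to drive $q$'s geometric rank down to $0$, and then delete $q$ as a singleton leaf. Each call preserves the valid-triple property by \cref{lem:rankdecrease}, leaves $\ksi_G(p)$ unchanged for every $p \neq q$, and decreases the distance from $q$ to its root in the leveled forest by exactly one; hence after all $j$ iterations $q$ is an isolated vertex of $F^{(0)}$ with $\ksi^{(0)}_G(q) = 0$. Removing $q$ from $F^{(0)}$ and restricting the two rank functions to $P \setminus \{q\}$ then yields the desired $(F',\ksi'_G,\ksi'_S)$, whose valid-triple properties transfer routinely from those of $(F^{(0)},\ksi^{(0)}_G,\ksi^{(0)}_S)$ since the deleted vertex had no incident edges and all conditions involving $q$ vanish.

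The bulk of the work is the smooth-rank crossing bound. Write $M_h^{(i)} := \{p \in P : \ksi^{(i)}_S(p) \geq h\}$ for the set of points whose smooth rank is at least $h$ after the first $j-i$ iterations. The central claim is that for every $h \in \mathbb{N}$, both $|M_h^{(0)} \setminus M_h^{(j)}| \leq 1$ and $|M_h^{(j)} \setminus M_h^{(0)}| \leq 1$. The key structural observation is that \cref{lem:rankdecrease} guarantees the distance from $q$ to its root in $F^{(i+1)}$ equals $i+1$ for every $i \in \{0,1,\ldots,j-2\}$, so the ancestor $v$ of $q$ at height $i+1$ that the algorithm examines in iteration $i$ is in fact the root of $q$'s tree. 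Invoking the ``$v$ is a root'' clauses of \cref{lem:rankdecrease}, the (at most one) $p_{dec}$ in iteration $i \leq j-2$ has old smooth rank exactly $i+1$ and new smooth rank $i$, and the (at most one) $p_{inc}$ goes from $i$ to exactly $i+1$; hence iteration $i \leq j-2$ affects membership in $M_h$ only at $h = i+1$. In the only potentially exceptional iteration $i = j-1$, $v$ need not be a root, but \cref{lem:rankdecrease} still gives $\ksi'_S(p_{dec}) = j-1$ with old rank $\geq j$, and $\ksi_S(p_{inc}) = j-1$ with new rank $\geq j$, so this iteration affects $M_h$ only for $h \geq j$. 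Consequently, for each $h$ exactly one iteration can alter $M_h$, and in that iteration at most one point enters and at most one leaves.

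Both recourse statements now follow. For the first, $\{p \in P \setminus \{q\} : \ksi_S(p) < h,\; \ksi'_S(p) \geq h\} = (M_h^{(0)} \setminus M_h^{(j)}) \setminus \{q\}$, which has size at most one. For the second, since the distance from $q$ to its root in $F^{(0)}$ is $0$, the valid-triple condition that $\ksi^{(0)}_S(q)$ is bounded by this distance forces $\ksi^{(0)}_S(q) = 0 < h$ for every $h \geq 1$, so $q \notin M_h^{(0)}$; hence whenever $\ksi_S(q) \geq h$ the point $q$ itself is the unique element of $M_h^{(j)} \setminus M_h^{(0)}$, leaving no room for any $p \in P \setminus \{q\}$ with $\ksi_S(p) \geq h$ and $\ksi'_S(p) < h$, while if $\ksi_S(q) < h$ the bound $|M_h^{(j)} \setminus M_h^{(0)}| \leq 1$ directly yields at most one qualifying point in $P \setminus \{q\}$. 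The main obstacle I foresee is establishing the root-alignment claim that the height of the root of $q$'s tree in $F^{(i+1)}$ is exactly $i+1$ for all $i \leq j-2$, since it is precisely this structural fact that converts what could a priori be wild multi-level smooth-rank rearrangements into clean single-level changes summing correctly across the $j$ iterations.
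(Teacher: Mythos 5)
Your proposal is correct and takes essentially the same approach as the paper: iterate \texttt{RANKDECREASE} to drive $q$'s geometric rank to zero, delete the resulting singleton leaf, and exploit the root-alignment fact that after the first call the root of $q$'s tree sits at exactly height $\ksi_G^{(i+1)}(q)=i+1$, so the ``$v$ is a root'' clauses of \cref{lem:rankdecrease} pin the smooth ranks of $p_{dec}$ and $p_{inc}$ to exactly $i+1$ and $i$. The paper records the per-level crossing bound as a five-part reverse-inductive invariant, while you reorganize the same content by identifying, for each level $h$, the unique critical iteration that can alter $M_h$; these are bookkeeping variants of one argument, not different proofs.
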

\begin{proof}
We first prove by reverse induction on $i$ that for every $i \in \{0,1,\ldots,\ksi_G(q) - 1\}$, it holds that

\begin{enumerate}
    \item $(F^{(i)},\ksi_G^{(i)},\ksi_S^{(i)})$ is a valid triple for $P$
    \item $\ksi_G^{(i)}(q) = i$ and $\ksi_G^{(i)}(p) = \ksi_G(p)$ for every $p \in P \setminus \{q\}$
    \item The distance between $q$ and its root in $F^{(i)}$ is $i$.
    \item For every $h \in \mathbb{N}$, the number of points $p \in P$ satisfying $\ksi_S(p) < h$ and $\ksi^{(i)}_S(p) \geq h$ is zero if $h \leq i$ and at most one if $h > i$.
    \item For every $h \in \mathbb{N}$, the number of points $p \in P$ satisfying $\ksi_S(p) \geq h$ and $\ksi^{(i)}_S(p) < h$ is zero if $h \leq i$ and at most one if $h > i$.
\end{enumerate}

\textbf{Base case:} We start with the base case $i = \ksi_G(q) - 1$ (assuming $\ksi_G(q) \geq 1$). Note that we obtain $(F^{(\ksi_G(q) - 1)},\ksi_G^{(\ksi_G(q) - 1)},\ksi_S^{(\ksi_G(q) - 1)})$ by calling \texttt{RANKDECREASE} with input $q,P,F,\ksi_G,\ksi_S$ and that the input satisfies the preconditions. Thus, the guarantees of \texttt{RANKDECREASE} directly give that $(F^{(\ksi_G(q) - 1)},\ksi_G^{(\ksi_G(q) - 1)},\ksi_S^{(\ksi_G(q) - 1)})$ is a valid triple for $P$. We also directly get that $\ksi_G^{(\ksi_G(q) - 1)}(q) = \ksi_G(q) - 1$ and $\ksi_G^{(\ksi_G(q) - 1)}(p) = \ksi_G(p)$ for every $p \in P \setminus \{q\}$ and also that the distance between $q$ and its root in $F^{(\ksi_G(q) - 1)}$ is $\ksi_G(q) - 1$.

Next, we have to show that for every $h \in \mathbb{N}$, the number of points $p \in P$ satisfying $\ksi_S(p) < h$ and $\ksi_S^{(\ksi_G(q) - 1)} \geq h$ is zero if $h \leq \ksi_G(q) - 1$ and at most one if $h > \ksi_G(q) - 1$. According to the guarantees of \texttt{RANKDECREASE}, there exists at most one point  $p_{inc} \in P$ with $\ksi_S^{(\ksi_G(q) - 1)}(p_{inc}) > \ksi_S(p_{inc})$ and that point satisfies $\ksi_S(p_{inc}) = \ksi_G(q) - 1$, as needed.

Similarly, we have to show that for every $h \in \mathbb{N}$, the number of points $p \in P$ satisfying $\ksi_S(p) \geq h$ and $\ksi_S^{(\ksi_G(q) - 1)} < h$ is zero if $h \leq \ksi_G(q) - 1$ and at most one if $h > \ksi_G(q) - 1$.
According to the guarantees of \texttt{RANKDECREASE}, there exists at most one point  $p_{dec} \in P$ with $\ksi_S^{(\ksi_G(q) - 1)}(p_{dec}) < \ksi_S(p_{dec})$ and that point satisfies $\ksi_S^{(\ksi_G(q) - 1)}(p_{dec}) = \ksi_G(q) - 1$, as needed. 

This finishes the base case.

\textbf{Inductive step:} For the induction step, consider some $i \in \{0,1,\ldots, \ksi_G(q) - 2\}$. We show that the conditions hold for $i$ given that they hold for $i+1$. First, note that the conditions for $i+1$ directly imply that $q,P,F^{(i+1)},\ksi_G^{(i+1)},\ksi_S^{(i+1)}$ satisfy the preconditions of $\texttt{RANKDECREASE}$. Similar as for the base case, the guarantees of \texttt{RANKDECREASE} directly give that $(F^{(i)},\ksi_G^{(i)},\ksi_S^{(i)})$ is a valid triple for $P$. We also directly get that $\ksi_G^{(i)}(q) = \ksi_G^{(i+1)}(q) - 1 = (i+1) - 1 = i$ and $\ksi_G^{(i)}(p) = \ksi_G^{(i+1)}(p) = \ksi_G(p)$ for every $p \in P \setminus \{q\}$ and also that the distance between $q$ and its root in $F^{(\ksi_G(q) - 1)}$ is $\ksi_G^{(i+1)}(q) - 1 = i$.

Next, we have to show that for every $h \in \mathbb{N}$, the number of points $p \in P$ satisfying $\ksi_S(p) < h$ and $\ksi_S^{(i)} \geq h$ is zero if $h \leq i$ and at most one if $h > i$.
This is implied by the following two guarantees:
First, the condition for $i+1$ gives that for every $h \in \mathbb{N}$, the number of points $p \in P$ satisfying $\ksi_S(p) < h$ and $\ksi_S^{(i+1)}(p) \geq h$ is zero if $h \leq i+1$ and at most one if $h > i+1$.
Second, the guarantee of \texttt{RANKDECREASE} gives that there exists at most one point $p_{inc} \in P$ with $\ksi_S^{(i)}(p_{inc}) > \ksi^{(i+1)}_S(p_{inc})$ and that point satisfies $\ksi^{(i+1)}_S(p_{inc}) = \ksi_G^{(i+1)}(q) - 1 = i$ and $\ksi^{(i)}_S(p_{inc}) = \ksi_G^{(i+1)}(q) = i+1$. Here we use the fact that the distance from $q$ to its root in $F^{(i+1)}$ is $i+1 = \ksi^{(i+1)}_G(q)$.
In more detail, first consider some $h \leq i$. There is no point $p \in P$ with $\ksi_S(p) < h$ and $\ksi_S^{(i+1)}(p) \geq h$ and also no point $p \in P$ with $\ksi_S^{(i+1)}(p) < h$ and $\ksi_S^{(i)}(p) \geq h$. Hence, there is no point $p \in P$ with $\ksi_S(p) < h$ and $\ksi_S^{(i)}(p) \geq h$, as needed.
For $h = i+1$, there is no point $p \in P$ with $\ksi_S(p) < h$ and $\ksi_S^{(i+1)}(p) \geq h$ and at most one point $p \in P$ with $\ksi_S^{(i+1)}(p) < h$ and $\ksi_S^{(i)}(p) \geq h$. Hence, there is at most one point $p \in P$ with $\ksi_S(p) < h$ and $\ksi_S^{(i)}(p) \geq h$, as needed. For $h > i+1$, there is at most one point $p \in P$ with $\ksi_S(p) < h$ and $\ksi_S^{(i+1)}(p) \geq h$ and no point $p \in P$ with $\ksi_S^{(i+1)}(p) < h$ and $\ksi_S^{(i)}(p) \geq h$. Hence, there is at most one point $p \in P$ with $\ksi_S(p) < h$ and $\ksi_S^{(i)}(p) \geq h$, as needed.

Next, we have to show that for every $h \in \mathbb{N}$, the number of points $p \in P$ satisfying $\ksi_S(p) \geq h$ and $\ksi_S^{(i)}(p) < h$ is zero if $h \leq i$ and at most one if $h > i$.
This is implied by the following two guarantees:
First, the condition for $i+1$ gives that for every $h \in \mathbb{N}$, the number of points $p \in P$ satisfying $\ksi_S(p) \geq h$ and $\ksi_S^{(i+1)}(p) < h$ is zero if $h \leq i+1$ and at most one if $h > i+1$.
Second, the guarantee of \texttt{RANKDECREASE} gives that there exists at most one point $p_{dec} \in P$ with $\ksi_S^{(i)}(p_{dec}) < \ksi^{(i+1)}_S(p_{dec})$ and that point satisfies $\ksi^{(i+1)}_S(p_{dec}) = \ksi_G^{(i+1)}(q) = i + 1$ and $\ksi^{(i)}_S(p_{dec}) = \ksi_G^{(i+1)}(q) - 1 = i$. Here we use the fact that the distance from $q$ to its root in $F^{(i+1)}$ is $i+1 = \ksi^{(i+1)}_G(q)$.
The precise reasoning is analogous to the reasoning above.
This finishes the proof of the induction.

\textbf{Finishing the proof:}
We are now ready to finish the proof. First, note that we have only proven that the conditions above hold for $i = 0$ if $\ksi_G(q) \geq 1$. However, if $\ksi_G(q) = 0$, so if the for-loop is skipped, one can directly show that all of the conditions above hold for $i=0$. 
As $(F^{(0)},\ksi_G^{(0)},\ksi_S^{(0)})$ is a valid triple for $P$ and $\ksi_G^{(0)}(q) = 0$, we get that $q$ is an isolated vertex in $F^{(0)}$. Thus, removing $q$ from $F^{(0)}$ to obtain $F'$ results in a leveled forest for $P \setminus \{q\}$. Moreover, setting $\ksi'_G$ and $\ksi'_S$ equal to $\ksi^{(0)}_G$ and $\ksi^{(0)}_S$ restricted to $P \setminus \{q\}$ together with $(F^{(0)},\ksi_G^{(0)},\ksi_S^{(0)})$ being a valid triple for $P$ then also implies that $(F',\ksi'_G,\ksi'_S)$ is a valid triple for $P \setminus \{q\}$. We also have for every $p \in P \setminus \{q\}$ that $\ksi'_G(p) = \ksi^{(0)}_G(p) = \ksi_G(p)$.
Next, we have to show that for every $h \in \mathbb{N}$, the number of points $p \in P \setminus \{q\}$ satisfying $\ksi_S(p) < h$ and $\ksi_S'(p) \geq h$ is at most one. This directly follows from the property that for every $h \in \mathbb{N}$, the number of points $p \in P$ with $\ksi_S(p) < h$ and $\ksi_S^{(0)}(p) \geq h$ is at most one and $\ksi_S'(p) = \ksi_S^{(0)}(p)$ for every $p \in P \setminus \{q\}$.
Finally, we have to show that for every $h \in \mathbb{N}$, the number of points $p \in P$ satisfying $\ksi_S(p) \geq h$ and either $q = p$ or $\ksi'_S(p) < h$ is also at most one.
This directly follows from the property that for every $h \in \mathbb{N}$, the number of points $p \in P$ satisfying $\ksi_S(p) \geq h$ and $\ksi_S^{(0)}(p) < h$ is at most one and the fact that $\ksi'_S(p) = \ksi^{(0)}_S(p)$ for every $p \in P \setminus \{q\}$.
\end{proof}

\paragraph{GroupIncrease operation}
Next, we show how the leveled forest is updated for each batch of points whose geometric rank increases from $r$ to $r+1$. This formalizes the operation of adding one new node to the forest in the right pictures of \cref{fig:forest,fig:forest2}. 

\begin{algorithm}
\caption{GroupIncrease Operation}
\label{alg:groupincrease}
\textbf{Input:} 
$P \subseteq \mathcal{X}$ for some metric space $(\mathcal{X},d)$ \\
$(F,\ksi_G,\ksi_S)$ is a valid triple for $P$ \\
$h \in \mathbb{N_0}$ \\
$Q$ is a subset of $P$ such that for every $q \in Q$, it holds that (1) $\ksi_G(q) = h$, (2) $2^{h+1} \leq d(q,q') < 2^{h+2}$ for every $q' \in Q \setminus \{q\}$ and (3) $d(q,p) \geq 2^{\min(h+1,\ksi_G(p))}$ for every $p \in P \setminus Q$. \\
\textbf{Output:} $(F',\ksi'_G,\ksi'_S)$ is a valid triple for point set $P$ with $\ksi'_G(q) = \ksi_G(q) + 1$ for every $q \in Q$ and $\ksi_G'(p) = \ksi_G(p)$ for every $p \in P \setminus Q$.
\begin{algorithmic}[1]
\Procedure{GroupIncrease}{$Q$,$h$,$P$,$F$,$\ksi_G$,$\ksi_S$} 
    \If{$Q = \emptyset$}
        \State $(F',\ksi'_G,\ksi'_S) \leftarrow (F,\ksi_G,\ksi_S)$
        \State \Return $(F',\ksi'_G,\ksi'_S)$
    \EndIf
    \State For each point $q \in Q$, let $r_q$ be the root of $q$ in $F$ (satisfying $h_F(r_q) = \ksi_G(q) = h$). 
    \State Add a new vertex $r$ to $F$ and add one edge from $r_q$ to $r$ for every $q \in Q$ to obtain $F'$.  
    \State $\ksi'_G(p) =
\begin{cases}
    \ksi_G(p), & \text{if } p \in P \setminus Q \\
    h+ 1, & \text{if } p  \in Q
\end{cases}$
    \State Let $q \in Q$ be an arbitrary point and $p^*$ the unique point in the subtree of $r_q$ satisfying $\ksi_S(p^*) = h$.
    \State $\ksi'_S(p) =
\begin{cases}
    \ksi_S(p), & \text{if } p \in P \setminus \{p^*\} \\
    h + 1, & \text{if } p = p^*
\end{cases}$\\
\Return $(F',\ksi'_G,\ksi'_S)$
\EndProcedure
\end{algorithmic}
\end{algorithm}

\begin{lemma}[GroupIncrease Lemma]
\label{lem:groupincrease}
 Let $(\mathcal{X},d)$ be a metric space, $P \subseteq \mathcal{X}$ be a point set and $(F,\ksi_G,\ksi_S)$ be a valid triple for $P$. Let $h \in \mathbb{N_0}$ and $Q \subseteq P$ be a subset such that for every $q \in Q$, the following conditions hold:

\begin{enumerate}
    \item $\ksi_G(q) = h$
    \item $2^{h+1} \leq d(q,q') < 2^{h+2}$ for every $q' \in Q \setminus \{q\}$ \\
    \item $d(q,p) \geq 2^{\min(h+1,\ksi_G(p))}$ for every $p \in P \setminus Q$
\end{enumerate}
Let $(F',\ksi_G',\ksi_S') \leftarrow \texttt{GROUPINCREASE}(Q,h,P,F,\ksi_G,\ksi_S)$ (\cref{alg:groupincrease}). Then, $(F',\ksi_G',\ksi_S')$ is a valid triple for point set $P$ with $\ksi'_G(q) = \ksi_G(q) + 1$ for every $q \in Q$ and $\ksi'_G(p) = \ksi_G(p)$ for every $p \in P \setminus Q$. Moreover, there exists at most one vertex $p^* \in P$ such that $\ksi_S(p^*) = h$, $\ksi_S'(p^*) = h+1$ and for every other point $p$ in $P$ not equal to $p^*$, it holds that $\ksi_S'(p) = \ksi_S(p)$.
\end{lemma}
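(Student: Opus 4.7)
The case $Q = \emptyset$ is immediate, since the algorithm returns the input triple unchanged. For $Q \neq \emptyset$, I first establish a structural fact that makes the algorithm well-defined: for every $q \in Q$, the root $r_q$ of $q$ in $F$ satisfies $h_F(r_q) = h$, and moreover $q$ is the \emph{unique} leaf in the subtree of $r_q$ with $\ksi_G \geq h+1$. Both facts follow by the same contradiction template: any leaf $p \neq q$ in the subtree of $r_q$ with $\ksi_G(p) \geq h+1$ (which would be forced to exist by property~4 of a valid triple if $h_F(r_q) > h$) must satisfy $d(q,p) < 2^{h+1}$ by property~5 applied at the common ancestor of $q$ and $p$ at height $h$. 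But condition~2 of the lemma (if $p \in Q$) and condition~3 (if $p \in P \setminus Q$, using $\ksi_G(p) \geq h+1$) both force $d(q,p) \geq 2^{h+1}$, a contradiction. Consequently, attaching a new common parent $r$ to all of the roots $r_q$ yields a valid leveled forest with $h_{F'}(r) = h+1$.

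With this in hand, I verify the five properties of a valid triple for $(F', \ksi'_G, \ksi'_S)$. Property~1 (separation of $\ksi'_G$) splits into three cases: both points in $P \setminus Q$ (original separation), both in $Q$ (condition~2), and mixed (condition~3). Property~2 is immediate, since the only rank changes are $\ksi'_G(q) = \ksi'_S(p^*) = h+1$, and the corresponding paths to $r$ in $F'$ have length $h+1$. Property~4 for the new edges $(r_q, r)$ is exactly the uniqueness of $q$ established above; for all other edges nothing has changed. Property~5 at $r$ requires that any two distinct leaves $p_1, p_2$ in the subtree of $r$ with $\min(\ksi'_G(p_1), \ksi'_G(p_2)) \geq h+1$ satisfy $d(p_1, p_2) < 2^{h+2}$: by uniqueness, such leaves must both lie in $Q$ (one per subtree of some $r_q$), and condition~2 gives exactly this bound. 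At every other node, property~5 reduces to the input's property~5, since the relevant subtree and ranks are unchanged.

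The most delicate verification is property~3 at $r$: I need $p^*$ to be the unique leaf in the subtree of $r$ with $\ksi'_S \geq h+1$. Existence and well-definedness of $p^*$ follow from property~3 at the chosen $r_q$ (a unique leaf with $\ksi_S \geq h$ exists) together with property~2 of the input (which caps $\ksi_S \leq h$ in the subtree of $r_q$, since $h_F(r_q) = h$). The same cap applied to every $r_{q'}$ shows that before the update no leaf anywhere in the subtree of $r$ has $\ksi_S \geq h+1$, so after setting $\ksi'_S(p^*) = h+1$ the point $p^*$ is indeed the unique such leaf. At every node $v \neq r$, the witness leaf for property~3 is unchanged: the only modification inside any $r_q$-subtree is $p^* \mapsto h+1$, and $p^*$ continues to be a valid (and the unique) witness for the ancestors of $p^*$ strictly below $r$. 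The final bullet about smooth-rank changes is immediate from the algorithm's definition of $\ksi'_S$. The main obstacle is organizing the uniqueness arguments in the first step; once those are in place, the rest is careful bookkeeping.
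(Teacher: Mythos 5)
Your approach mirrors the paper's: establish $h_F(r_q) = h$ by contradiction using properties~4 and~5 of a valid triple together with condition~3, and then sweep over the five valid-triple conditions for $(F',\ksi'_G,\ksi'_S)$ one by one; your handling of property~3 at $r$ via the $\ksi_S \leq h$ cap inside every $r_{q'}$-subtree is exactly right, as is extracting the bound for property~5 at $r$ from condition~2. However, your ``structural fact'' is misstated in a way that hides a subtlety. You claim that $q$ is the unique leaf in the subtree of $r_q$ with $\ksi_G \geq h+1$, but once $h_F(r_q) = h$ is established, property~2 of the \emph{input} valid triple implies no leaf in that subtree has $\ksi_G \geq h+1$ at all --- including $q$ itself, since $\ksi_G(q) = h$. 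What property~4 actually requires at the new edge $(r_q,r)$ is that $q$ be the unique leaf there with $\ksi'_G \geq h+1$, and since only $Q$-points have their geometric rank raised, the thing to rule out is a second $Q$-point $q'$ with $r_{q'} = r_q$. Such a $q'$ has $\ksi_G(q') = h$, which does not satisfy the stated hypothesis $\ksi_G(p) \geq h+1$ of your ``contradiction template,'' so your parenthetical ``condition~2 (if $p \in Q$)'' can never actually fire under that template. The fix is a separate one-liner: if $q' \in Q\setminus\{q\}$ were in the subtree of $r_q$, then $\min(\ksi_G(q),\ksi_G(q')) = h = h_F(r_q)$, so property~5 at $r_q$ gives $d(q,q') < 2^{h+1}$, contradicting condition~2. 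With that correction the proof is sound and follows the paper's route.
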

\begin{proof}
The only interesting case is $Q \neq \emptyset$, which we assume from now on.
Let $q \in Q$ and $r_q$ be the root of $q$ in $F$. We first verify that indeed $h_F(r_q) = \ksi_G(q)$. First, $h_F(r_q) \geq \ksi_G(q)$ follows from our assumption that $\ksi_G(q)$ is at most the length from $q$ to its root in $F$. For the sake of contradiction, assume that $h_F(r_q) > \ksi_G(q)$. Then, let $u$ and $v$ be the ancestor of $q$ in $F$ of distance $\ksi_G(q)$ and $\ksi_G(q) + 1$, respectively. By our assumption, there exists a leaf $p_{uv}$ in the subtree of $u$ with $\ksi_G(p_{uv}) \geq h_F(v) = \ksi_G(q) + 1$. In particular, this would imply that $q$ and $p_{uv}$ are two distinct leaves in the subtree of $u$ such that $\min(\ksi_G(q),\ksi_G(p_{uv})) \geq h_F(u)$. As $(F,\ksi_G,\ksi_S)$ is a valid triple, that would imply that $d(q,p_{uv}) < 2^{h_F(u) + 1} = 2^{\ksi_G(q) + 1} = 2^{h+1}$. However, as $\ksi_G(p_{uv}) \neq h$, we have $p_{uv} \notin Q$ and one of our assumptions is that $d(q,p_{uv}) \geq 2^{\min(h+1,\ksi_G(p_{uv}))} = 2^{h+1}$, a contradiction. Therefore, it indeed holds that $h_F(r_q) = \ksi_G(q) = h$.

Thus, we obtain $F'$ from $F$ by considering some non-empty subset $R_Q$ of roots in $F$, with each root having the same height, and adding an edge from each $r_q \in R_Q$ to a newly added vertex $r$.
Together with our assumption that $F$ is a leveled forest for $P$, this implies that $F'$ is also a leveled forest for $P$.

Next, we verify that $\ksi'_G$ satisfies the separation property, namely that for  any two distinct points $p_1,p_2 \in P$, it holds that $d(p_1,p_2) \geq 2^{\min(\ksi_G'(p_1),\ksi_G'(p_2))}$. If both $p_1$ and $p_2$ are in $Q$, then $\ksi'_G(p_1) = \ksi'_G(p_2) = h+1$ and one of our assumptions is that $d(p_1,p_2) \geq 2^{h+1}$. If $p_1,p_2 \notin Q$, then $\ksi_G'(p_1) = \ksi_G(p_1)$ and $\ksi_G'(p_2) = \ksi_G(p_2)$. As we assume that $\ksi_G$ satisfies the separation property, we thus get that $d(p_1,p_2) \geq 2^{\min(\ksi_G(p_1),\ksi_G(p_2))} = 2^{\min(\ksi_G'(p_1),\ksi_G'(p_2))}$. Finally, if exactly one of $p_1$ and $p_2$ is contained in $Q$, let's say $p_1$, then $d(p_1,p_2) \geq 2^{\min(h+1,\ksi_G(p_2))}$ is one of our assumptions and thus $d(p_1,p_2) \geq 2^{\min(\ksi_G'(p_1),\ksi_G'(p_2))}$ follows from $\ksi_G'(p_1) = h+1$ and $\ksi_G'(p_2) = \ksi_G(p_2)$. Thus, the separation property indeed holds.

The property that for every $p \in P$, the values of $\ksi'_G(p)$ and $\ksi'_S(p)$ are at most the length of the path from $p$ to its root in $F'$ straightforwardly follows from how we derive $(F',\ksi'_G,\ksi'_S)$ from $(F,\ksi_G,\ksi_S)$ and our assumption that $\ksi_G(p)$ and $\ksi_S(p)$ are at most the length of the path from $p$ to its root in $F$.

We also have to verify the property that for every node $v$ in the forest $F'$, there is exactly one leaf $p'_v$ in its subtree such that $\ksi'_S(p'_v) \geq h_{F'}(v)$. If $v$ is equal to the newly added vertex $r$, then it is easy to verify that $p^*$ is the only point in the subtree of $r$ with $\ksi'_S(p^*) = h+1 \geq h_{F'}(r)$. If $v$ is not equal to $r$, then one of our assumptions is that there is exactly one leaf $p_v$ in its subtree in $F$ such that $\ksi_S(p_v) \geq h_{F}(v)$. It is easy to verify that $p_v$ is also the only leaf in the subtree of $v$ in $F'$ such that $\ksi'_S(p_v) \geq h_{F'}(v)$.

Next, we verify that for each edge $(u,v)$ in $F'$, where $v$ is the parent of $u$, there is exactly one leaf $p'_{uv}$ in the subtree of $u$ such that $\ksi'_G(p'_{uv}) \geq h_{F'}(v)$. If $(u,v) = (r_q,r)$ for some $q \in Q$, then it is easy to verify that $q$ is the only point in the subtree of $r_q$ with $\ksi'_G(q) = h+1 \geq h_{F'}(r)$. Otherwise, if $(u,v)$ is also in $F$, then one of our assumptions is that there is exactly one leaf $p_{uv}$ in the subtree of $u$ in $F$ such that $\ksi_G(p_{uv}) \geq h_F(v)$. It is easy to verify that $p_{uv}$ is also the only leaf in the subtree of $u$ in $F'$ such that $\ksi'_G(p_{uv}) \geq h_{F'}(v)$.

Let $v$ be a node in $F'$, and $p_1,p_2$ be two distinct leaves in the subtree of $v$ in $F$ such that $\min(\ksi'_G(p_1),\ksi'_G(p_2)) \geq h_{F'}(v)$. We have to show that this implies $d(p_1,p_2) < 2^{h_{F'}(v) + 1}$. First, consider the case $v = r$. In that case, both $p_1,p_2 \in Q$ and one of our assumptions is that $d(p_1,p_2) < 2^{h+2} = 2^{h_{F'}(v) + 1}$. Thus, it remains to consider the case that $v \neq r$. Then, $\min(\ksi'_G(p_1),\ksi'_G(p_2)) \geq h_{F'}(v)$ implies $\min(\ksi_G(p_1),\ksi_G(p_2)) \geq h_{F'}(v) = h_F(v)$ and thus one of our assumptions is that $d(p_1,p_2) < 2^{h_F(v) + 1} = 2^{h_{F'}(v) + 1}$, as needed.

Finally, note that the fact that there exists at most one vertex $p^* \in P$ such that $\ksi_S(p^*) = h$, $\ksi'_S(p^*) = h+1$ and for every other point $p$ in $P$ not equal to $p^*$, it holds that $\ksi'_S(p) = \ksi_S(p)$ directly follows from the way $\ksi'_S$ is defined.
\end{proof}

\paragraph{Delete operation}
Finally, we put \cref{alg:del_without_max,alg:groupincrease} together to get \cref{alg:delete} that we run whenever a point is deleted from our solution. This algorithm formalizes the algorithm sketched in pictures in \cref{fig:forest,fig:forest2}. 

\begin{algorithm}
\caption{Delete Operation}
\label{alg:delete}
\textbf{Input:} $P \subseteq \mathcal{X}$ for some metric space $(\mathcal{X},d)$ and the minimum pairwise distance in $P$ is at least $1$ \\
                $(F,\ksi_G,\ksi_S)$ is a valid triple for $P$ and $\ksi_G$ additionally satisfies the maximality property \\
                $q$ is an arbitrary point in $P$ \\
                $\Delta \in \mathbb{N}$ is an upper bound on the maximum pairwise distance between any two points in $P$ \\
                There exists a point $p \in P$ with $\ksi_G(p) \geq \lceil \log(\Delta) + 1\rceil$ \\
\textbf{Output:} $(F', \ksi'_G, \ksi'_S)$ is a valid triple for point set $P \setminus \{q\}$, $\ksi'_G$ satisfies the maximality property, if $P \setminus\{q\} \neq \emptyset$, then there exists $p \in P \setminus \{q\}$ with $\ksi'_G(p) \geq \lceil\log(\Delta) + 1\rceil$.
\begin{algorithmic}[1]
\Procedure{Delete}{$q$,$P$,$F$,$\ksi_G$,$\ksi_S$,$\Delta$}
\State $(F^{(0)},\ksi^{(0)}_G,\ksi^{(0)}_S) \leftarrow \texttt{DELETEWITHOUTMAXIMALITY}(q,P,F,\ksi_G,\ksi_S)$
\For{$i = 0,1,\ldots,\lceil \log(\Delta) + 1\rceil - 1$}
\State Let $Q_i \subseteq P \setminus \{q\}$ be a \emph{maximal} subset such that for every $q_i \in Q_i$, it holds that (1) $\ksi^{(i)}_G(q_i) = i$, (2) $d(q_i,q_i') \geq 2^{i+1}$ for every $q_i' \in Q_i \setminus \{q_i\}$ and (3) $d(q_i,p) \geq 2^{\min(i+1,\ksi_G^{(i)}(p))}$ for every $p \in P \setminus (\{q\} \cup Q_i)$. 
\State $(F^{(i+1)},\ksi^{(i+1)}_G,\ksi^{(i+1)}_S) \leftarrow \texttt{GROUPINCREASE}(Q_i,i, P \setminus \{q\},F^{(i)},\ksi^{(i)}_G,\ksi^{(i)}_S)$
\EndFor
$(F',\ksi_G',\ksi_S') \leftarrow (F^{(\lceil \log(\Delta) + 1\rceil)},\ksi_G^{(\lceil \log(\Delta) + 1\rceil)},\ksi_S^{(\lceil \log(\Delta) + 1\rceil)})$
\State \Return $(F', \ksi'_G, \ksi'_S)$
\EndProcedure

\end{algorithmic}
\end{algorithm}

\begin{lemma}
[Deletion Lemma]
\label{lem:deletion}
Let $(\mathcal{X},d)$ be a metric space and $P \subseteq \mathcal{X}$ be a point set with minimum pairwise distance at least $1$. Let $(F,\ksi_G,\ksi_S)$ be a valid triple for point set $P$ and $\ksi_G$ additionally satisfies the maximality property. Let $q \in P$ and $\Delta \in \mathbb{N}$ be an upper bound on the maximum pairwise distance between any two points in $P$.
Let $(F',\ksi'_G,\ksi'_S) \leftarrow \texttt{DELETE}(q,P,F,\ksi_G,\ksi_S,\Delta)$ (\cref{alg:delete}). Then, $(F',\ksi'_G,\ksi'_S)$ is a valid triple for point set $P \setminus \{q\}$ and $\ksi'_G$ satisfies the maximality property. Moreover, if $P \setminus \{q\} \neq \emptyset$, then there exists $p \in P \setminus \{q\}$ with $\ksi'_G(p) \geq \lceil \log(\Delta) + 1\rceil$.
Moreover, for every $h \in \mathbb{N}$, there exists at most one point $p \in P$ with $\ksi_S(p) \geq h$ and either $p = q$ or $\ksi'_S(p) < h$. Also, the number of points $p \in P \setminus \{q\}$ satisfying $\ksi_S(p) < h$ and $\ksi'_S(p) \geq h$ is at most $2$.
\end{lemma}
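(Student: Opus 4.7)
The plan is to analyze \cref{alg:delete} in stages. First, by \cref{lem:deletewithoutmaximality}, the triple $(F^{(0)},\ksi_G^{(0)},\ksi_S^{(0)})$ is a valid triple for $P\setminus\{q\}$ with $\ksi_G^{(0)}(p)=\ksi_G(p)$ for every $p\in P\setminus\{q\}$, and at most one point crosses any smooth-rank threshold $h$ in either direction. I would then show by induction on $i$ that each $(F^{(i+1)},\ksi_G^{(i+1)},\ksi_S^{(i+1)})$ is a valid triple for $P\setminus\{q\}$. Conditions (1) and (3) of \cref{lem:groupincrease} are baked into the definition of $Q_i$, so the only nontrivial obligation in verifying its preconditions is the upper bound $d(q_i,q_i')<2^{i+2}$ for distinct $q_i,q_i'\in Q_i$.

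The main obstacle is establishing this upper bound. I would prove the stronger claim that every $q_i\in Q_i$ satisfies $d(q_i,q)<2^{i+1}$; the upper bound on $d(q_i,q_i')$ then follows by the triangle inequality. Fix $q_i\in Q_i$ and let $j_0:=\ksi_G(q_i)$. Since ranks in the loop are monotone nondecreasing and can only rise by one per iteration, $\ksi_G^{(j_0)}(q_i)=j_0$, and since $\ksi_G^{(i)}(q_i)=i$, the point $q_i$ must belong to $Q_{j_0}$. By the pre-delete maximality of $\ksi_G$, either $q_i$ is the unique pre-delete maximum, which is impossible because it would force $j_0\geq\lceil\log(\Delta)+1\rceil$ and contradict $j_0\leq i\leq\lceil\log(\Delta)+1\rceil-1$, or there exists a witness $p'$ with $\ksi_G(p')>j_0$ and $d(q_i,p')<2^{j_0+1}$. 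If $p'\neq q$, then $p'\in P\setminus\{q\}$ survives, the value $\ksi_G^{(j_0)}(p')$ is still at least $\ksi_G(p')>j_0$ (so in particular $p'\notin Q_{j_0}$), and condition (3) for $q_i\in Q_{j_0}$ forces $d(q_i,p')\geq 2^{j_0+1}$, a contradiction. Hence $p'=q$ and $d(q_i,q)<2^{j_0+1}\leq 2^{i+1}$, as claimed.

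It remains to establish maximality of $\ksi_G'$, the existence of a point of rank at least $\lceil\log(\Delta)+1\rceil$ when $P\setminus\{q\}\neq\emptyset$, and the smooth-rank change bounds. For maximality, any $p\in P\setminus\{q\}$ with $\ksi_G'(p)=r<\lceil\log(\Delta)+1\rceil$ satisfies $\ksi_G^{(r)}(p)=r$ and $p\notin Q_r$, so the subset-maximality of $Q_r$ forces some blocker: either a point $q'\in Q_r$ with $d(p,q')<2^{r+1}$, yielding $\ksi_G'(q')=r+1>r$, or a point $p'\in P\setminus(\{q\}\cup Q_r)$ with $d(p,p')<2^{\min(r+1,\ksi_G^{(r)}(p'))}$; the case $\ksi_G^{(r)}(p')>r$ directly gives a higher-ranked close neighbor, while $\ksi_G^{(r)}(p')\leq r$ is ruled out by the separation property of $\ksi_G^{(r)}$ guaranteed by the inductive valid-triple assertion. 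The edge case $r=\lceil\log(\Delta)+1\rceil$ is handled by separation, which forces $p$ to be the unique maximum. The existence clause follows by the same reasoning applied to the overall maximum-rank point: if its rank were strictly below $\lceil\log(\Delta)+1\rceil$ it could itself have been added to the corresponding $Q_r$ without violating any constraint, contradicting the subset-maximality of $Q_r$. Finally, the smooth-rank bounds compose cleanly: \cref{lem:deletewithoutmaximality} contributes at most one crossing per threshold $h$ in each direction, and each iteration $i$ of \cref{lem:groupincrease} increases the smooth rank of at most one point from $i$ to $i+1$, so for each threshold $h$ only the iteration $i=h-1$ can add an upward crossing, yielding the bound $2$. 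Downward crossings are not introduced by the loop since \texttt{GROUPINCREASE} never decreases smooth ranks, preserving the bound of $1$.
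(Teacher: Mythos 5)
Your proof is correct and follows essentially the same approach as the paper: decompose \texttt{DELETE} into a \texttt{DELETEWITHOUTMAXIMALITY} call followed by iterated \texttt{GROUPINCREASE}, verify the latter's preconditions by a blocker argument identifying the deleted point $q$ as the only possible close inhibitor, and compose the smooth-rank change bounds from the two sub-lemmas. The organizational details differ slightly --- you carry out the key distance bound $d(q_i,q)<2^{i+1}$ at iteration $j_0=\ksi_G(q_i)$ rather than at iteration $i$, and you derive the maximality of $\ksi'_G$ post hoc by going back to iteration $r=\ksi'_G(p)$ rather than maintaining the paper's explicit ``restricted maximality up to level $i$'' invariant (invariant 2 in the paper's induction) --- which is a mild streamlining and is equally valid, since both arguments ultimately rest on the same observation that the rank trajectory increases by exactly one per matching iteration. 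One small imprecision: your edge case for maximality should read $r\geq\lceil\log(\Delta)+1\rceil$ rather than equality, since the pre-deletion maximum rank (which is preserved if that point is not $q$) may strictly exceed $\lceil\log(\Delta)+1\rceil$; the separation argument handles this case identically.
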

\begin{proof}

We first show by induction that for every $i \in \{0,1,\ldots, \lceil \log(\Delta) + 1\rceil\}$, the following holds:

\begin{enumerate}
    \item $(F^{(i)},\ksi_G^{(i)},\ksi_S^{(i)})$ is a valid triple for $P \setminus \{q\}$.
    \item For every $p \in P \setminus \{q\}$ with $\ksi^{(i)}_G(p) < i$, there exists a $p' \in P \setminus \{q\}$ with $\ksi_G^{(i)}(p') > \ksi_G^{(i)}(p)$ and $d(p,p') < 2^{\ksi_G^{(i)}(p) + 1}$.
    \item For every $p \in P \setminus \{q\}$, it holds that $\ksi^{(i)}_G(p) \geq \ksi_G(p)$. \\
    \item For every $p \in P \setminus \{q\}$, it holds that $\ksi^{(i)}_S(p) \geq \ksi^{(0)}_S(p)$.
    \item For every $j \geq i$ and $p \in P \setminus \{q\}$ with $\ksi_S^{(0)}(p) \leq j$, we have $\ksi_S^{(i)}(p) \leq j$.  
    \item For every $j \leq i$, there exists at most one $p \in P \setminus \{q\}$ with $\ksi^{(0)}_S(p) < j$ and $\ksi^{(i)}_S(p) \geq j$. 
\end{enumerate}

\textbf{Base case:} First, note that the preconditions for the \texttt{DELETE}-operation directly imply that the preconditions are satisfied when we call the \texttt{DELETEWITHOUTMAXIMALITY}-operation. Thus, it directly follows from the guarantees of the \texttt{DELETEWITHOUTMAXIMALITY}-operation that $(F^{(0)},\ksi_G^{(0)},\ksi_S^{(0)})$ is a valid triple for $P \setminus \{q\}$ and we also get the guarantee that $\ksi^{(0)}_G(p) = \ksi_G(p)$ for every $p \in P \setminus \{q\}$. All other conditions are trivially satisfied for $i = 0$.

\textbf{Inductive step:} Now, assume that the conditions hold for some $i \in \{0,1,\ldots,\lceil \log(\Delta) + 1\rceil - 1\}$. We next show that then the conditions also hold for $i+1$.

We first have to verify that the preconditions are met when we call $\texttt{GROUPINCREASE}$ with input $Q_i,i,P \setminus \{q\},F^{(i)}, \ksi_G^{(i)}$ and $\ksi_S^{(i)}$. The only precondition that does not directly follow from our assumption that $(F^{(i)},\ksi_G^{(i)},\ksi_S^{(i)})$ is a valid triple and the conditions we impose on $Q_i$ is the property that $d(q_i,q'_i) < 2^{i+2}$ for every $q_i,q'_i \in Q_i$. By triangle inequality, it suffices to show that $d(q_i,q) < 2^{i+1}$ for every $q_i \in Q_i$. Note that we assume that $\ksi_G$ satisfies the maximality property and that there exists a point $p \in P$ with $\ksi_G(p) \geq \lceil \log(\Delta) + 1\rceil$. As $q_i \in Q_i$, we have $\ksi_G(q_i) \leq \ksi^{(i)}_G(q_i) = i < \lceil \log(\Delta) + 1\rceil$. Hence, the maximality property implies that there exists $p' \in P$ with $\ksi_G(p') > \ksi_G(q_i)$ and $d(q_i,p') < 2^{\ksi_G(q_i) + 1} \leq 2^{\min(i+1,\ksi_G(p'))}$. The second condition of $Q_i$ implies that $p' \notin Q_i$ and the third condition together with $\ksi_G^{(i)}(p) \geq \ksi_G(p)$ for every $p \in P \setminus \{q\}$ implies that $p' \notin P \setminus (\{q\} \cup Q_i)$. Thus, $p' = q$ and we indeed get that $d(q_i,q) < 2^{i+1}$. Thus, the preconditions are satisfied when we call $\texttt{GROUPINCREASE}$ with input $Q_i,i,P \setminus \{q\},F^{(i)}, \ksi_G^{(i)}$ and $\ksi_S^{(i)}$.

Thus, we directly get that $(F^{(i+1)},\ksi_G^{(i+1)},\ksi_S^{(i+1)})$ is a valid triple for $P \setminus \{q\}$.

Next, let $p \in P \setminus \{q\}$ with $\ksi_G^{(i+1)}(p) < i+1$. We have to show that there exists $p' \in P \setminus \{q\}$ with $\ksi_G^{(i+1)}(p') > \ksi_G^{(i+1)}(p)$ and $d(p,p') < 2^{\ksi_G^{(i+1)}(p) + 1}$. As $\ksi_G^{(i+1)}(p) < i+1$, we have $p \notin Q_i$ and therefore $\ksi_G^{(i+1)}(p) = \ksi_G^{(i)}(p)$. First, consider the case that $\ksi_G^{(i)}(p) < i$. Then, the induction hypothesis implies that there exists $p' \in P \setminus \{q\}$ with $\ksi_G^{(i+1)}(p') \geq \ksi_G^{(i)}(p') > \ksi_G^{(i)}(p) = \ksi_G^{(i+1)}(p)$ and $d(p,p') < 2^{\ksi_G^{(i)}(p) + 1} = 2^{\ksi_G^{(i+1)}(p) + 1}$. Thus, it remains to consider the case that $\ksi_G^{(i+1)}(p) = \ksi_G^{(i)}(p) = i$. As $Q_i$ is a \emph{maximal} subset with the specified properties, this implies that there exists $q_i \in Q_i$ with $d(p,q_i) < 2^{i+1}$ or there exists some other point $p' \in P \setminus (\{q\} \cup Q_i)$ with $d(p,p') < 2^{\min(i+1,\ksi_G^{(i)}(p'))}$. In the former case, we have $\ksi_G^{(i+1)}(q_i) = i+1 > \ksi_G^{(i+1)}(p)$ and $d(p,q_i) < 2^{\ksi_G^{(i+1)}(p) + 1}$, as needed. Thus, it remains to consider the latter case. As $\ksi_G^{(i+1)}$ satisfies the separation property, we have $d(p,p') \geq 2^{\min(i,\ksi_G^{(i+1)}(p'))}$. Together with $d(p,p') < 2^{\min(i+1,\ksi_G^{(i)}(p'))}$, this implies that $\ksi_G^{(i+1)}(p') \geq i+1 > \ksi_G^{(i+1)}(p)$, which together with $d(p,p') < 2^{\min(i+1,\ksi_G^{(i)}(p'))} \leq 2^{\ksi_G^{(i+1)}(p) + 1}$ shows that the second property is satisfied.

The third property, namely that for every $p \in P \setminus \{q\}$, it holds that $\ksi_G^{(i+1)}(p) \geq \ksi_G(p)$ directly follows from $\ksi_G^{(i)}(p) \geq \ksi_G(p)$ for every $p \in P \setminus \{q\}$ and $\ksi_G^{(i+1)}(p) \geq \ksi^{(i)}_G(p)$ for every $p \in P \setminus \{q\}$.

Note that $\texttt{GROUPINCREASE}$ guarantees that there exists at most one $p^* \in P \setminus \{q\}$ with $\ksi^{(i)}_S(p^*) = i$ and $\ksi^{(i+1)}_S(p^*) = i+1$ and that for every $p \in P \setminus \{q\}$ not equal to $p^*$, we have $\ksi^{(i+1)}_S(p) = \ksi^{(i)}_S(p)$. Using this guarantee, a simple inductive argument shows that properties $4$, $5$ and $6$ are preserved. This finishes the induction.

\textbf{Finishing the proof:}
We are now ready to show that $(F',\ksi'_G,\ksi'_S)$ satisfies the output guarantees. The fact that $(F',\ksi_G',\ksi_S')$ is a valid triple for point set $P \setminus \{q\}$ directly follows from the fact that $(F^{(\lceil \log(\Delta) + 1\rceil)},$ $ \ksi_G^{(\lceil \log(\Delta) + 1\rceil)}, \ksi_S^{(\lceil \log(\Delta) + 1\rceil)})$ is a valid triple for point set $P \setminus \{q\}$. 

The second property for $i = \lceil \log(\Delta + 1)\rceil$ together with $\ksi'_G = \ksi_G^{(\lceil \log(\Delta) + 1\rceil)}$ implies that for every $p \in P \setminus \{q\}$ with $\ksi'_G(p) < \lceil \log(\Delta) + 1\rceil$, there exists a $p' \in P \setminus \{q\}$ with $\ksi_G'(p') > \ksi_G'(p)$ and $d(p,p') < 2^{\ksi_G'(p) + 1}$.
First, this implies that if $P \setminus \{q\} \neq \emptyset$, then there exists $p \in P \setminus \{q\}$ with $\ksi'_G(p) \geq \lceil \log(\Delta) + 1 \rceil$. Moreover, it also implies that $\ksi'_G$ satisfies the maximality property, using the fact that there exists at most one point $p \in P \setminus \{q\}$ with $\ksi'_G(p) \geq \lceil \log(\Delta) + 1 \rceil$. This in turn follows as for any two distinct points $p_1,p_2 \in P \setminus \{q\}$, we have $2^{\min(\ksi'_G(p_1),\ksi'_G(p_2))} \leq d(p_1,p_2) \leq \Delta$, where the first inequality follows because $\ksi'_G$ satisfies the separation property.

Next, consider an arbitrary $h \in \mathbb{N}$. We first have to show that there exists at most one $p \in P$ with $\ksi_S(p) \geq h$ and either $p = q$ or $\ksi'_S(p) < h$.
This directly follows from the guarantee of \texttt{DELETEWITHOUTMAXIMALITY} that there exists at most one $p \in P$ with $\ksi_S(p) \geq h$ and either $p = q$ or $\ksi_S^{(0)}(p) < h$ together with the guarantee that $\ksi'_S(p) \geq \ksi_S^{(0)}(p)$ for every $p \in P \setminus \{q\}$.

Finally, it remains to show that the number of points $p \in P \setminus \{q\}$ satisfying $\ksi_S(p) < h$ and $\ksi'_S(p) \geq h$ is at most $2$. This follows from the guarantee of \texttt{DELETEWITHOUTMAXIMALITY} that there exists at most one $p \in P \setminus \{q\}$ satisfying $\ksi_S(p) < h$ and $\ksi_S^{(0)}(p) \geq h$ and the guarantee that there exists at most one $p \in P \setminus \{q\}$ satisfying $\ksi_S^{(0)}(p) < h$ and $\ksi'_S(p) \geq h$. This finishes the proof. 
\end{proof}

\subsection{Finishing the proof of our main theorem}
\label{sec:finish}

Finally, we can finish the proof of our main theorem that we restate here for convenience. 

\maintechnical*

\begin{proof}
    For $i \in \{1,2,\ldots,n-1\}$, we denote by $q_i$ the unique point in $P_i \triangle P_{i+1}$.
    First, note that we can assume without loss of generality that $P_1 = \emptyset$.
    We associate a triple $(F^{(i)},\ksi^{(i)}_G,\ksi^{(i)}_S)$ with every $i \in \{1,2,\ldots,n\}$. We define $(F^{(1)},\ksi^{(1)}_G,\ksi^{(1)}_S)$ as the unique valid triple for the point set $P_1 = \emptyset$.

    For $i \in \{1,\ldots,n-1\}$, we define $(F^{(i+1)},\ksi^{(i+1)}_G,\ksi^{(i+1)}_S)$ inductively by setting $(F^{(i+1)},\ksi^{(i+1)}_G,\ksi^{(i+1)}_S) \leftarrow \texttt{INSERT}(q,P_i,F^{(i)},\ksi_G^{(i)},\ksi_S^{(i)},\Delta)$ if $q_i \notin P_i$ and $(F^{(i+1)},\ksi^{(i+1)}_G,\ksi^{(i+1)}_S) \leftarrow \texttt{DELETE}(q,P_i,F^{(i)},\ksi_G^{(i)},\ksi_S^{(i)},\Delta)$ otherwise.  

    We first show by induction that for every $i \in \{1,2,\ldots,n\}$, it holds that $(F^{(i)},\ksi^{(i)}_G,\ksi^{(i)}_S)$ is a valid triple for $P_i$, $\ksi^{(i)}_G$ satisfies the maximality property and if $P_i \neq \emptyset$, then there exists a $p \in P_i$ with $\ksi_G(p) \geq \lceil \log(\Delta) + 1 \rceil$.

    The base case $i = 0$ trivially holds. Now, consider some arbitrary $i \in \{1,2,\ldots,n-1\}$ and assume the statement holds for $i$. We show that the statement also holds for $i+1$.

    First, consider the case that $q_i \notin P_i$. Then, $q,P_i,F^{(i)}.\ksi_G^{(i)},\ksi_S^{(i)},\Delta$ satisfy the input conditions of \cref{lem:insertion}. Thus, $\cref{lem:insertion}$ guarantees that $(F^{(i+1)},\ksi^{(i+1)}_G,\ksi^{(i+1)}_S)$ satisfies the conditions for $i+1$.

    Similarly, if $q_i \in P_i$, then $q,P_i,F^{(i)}.\ksi_G^{(i)},\ksi_S^{(i)},\Delta$ satisfy the input conditions of \cref{lem:deletion}. Thus, \cref{lem:deletion} guarantees that $(F^{(i+1)},\ksi^{(i+1)}_G,\ksi^{(i+1)}_S)$ satisfies the conditions for $i+1$, which finishes the induction.

    For $i \in \{1,2,\ldots,n\}$, if $|P_i| \leq k$, we define $C_i = P_i$. Otherwise, if $|P_i| > k$, we include in $C_i$ the $k$ points in $P_i$ that got assigned the largest rank by $\ksi^{(i)}_S$. Ties can be resolved arbitrarily, with the exception that a point $p$ is favored to be in $C_i$ if it was already contained in $C_{i-1}$.

    If $|P_i| \leq k$, then  $\max_{p \in P_i} d(p,C_i) \leq 24 \cdot \min_{C \subseteq \mathcal{X},|C| \leq k} \max_{p \in P_i} d(p,C)$ holds trivially. If $|P_i| > k$, then $\max_{p \in P_i} d(p,C_i) \leq 24 \cdot \min_{C \subseteq \mathcal{X},|C| \leq k} \max_{p \in P_i} d(p,C)$ follows from \cref{lem:sufficient_for_good_approximation} as $C_i$ contains every point $p \in P_i$ with $\ksi_S^{(i)}(p) > (\ksi_S^{(i)})^*(k+1)$ and $(\ksi^{(i)}_G,\ksi^{(i)}_S)$ is a valid tuple for $P_i$ according to \cref{lem:valid_triple_maximality_implies_valid_tuple}.

    It remains to upper bound $|C_i \triangle C_{i+1}|$ for a given $i \in \{1,2,\ldots,n-1\}$.

    First, consider the case that $q_i \notin P_i$. \cref{lem:insertion} guarantees that $\ksi^{(i)}_S(p) = \ksi^{(i)}_S(p)$ for every $p \in P_i$. Thus, it directly follows from the way we define $C_{i+1}$ that $q_i$ can be the only point contained in $C_{i+1} \setminus C_i$ and thus there is also at most one point in $C_i \setminus C_{i+1}$. Hence, $|C_i \triangle C_{i+1}| \leq 2$.

    It remains to consider the case that $q_i \in P_i$. Then, \cref{lem:deletion} implies that for every $h \in \mathbb{N}$, there exists at most one point $p \in P_i$ with $\ksi^{(i)}_S(p) \geq h$ and either $p = q_i$ or $\ksi^{(i+1)}_S(p) < h$. Also, the number of points $p \in P_{i+1}$ satisfying $\ksi^{(i)}_S(p) < h$ and $\ksi^{(i+1)}_S(p) \geq h$ is at most $2$. It is easy to verify that this implies that  $|C_i \triangle C_{i+1}| \leq 4$.
    Finally, it is easy to verify that for every $i \in \{1,2,\ldots,n\}$, $C_i$ is fully determined by $P_1,P_2,\ldots,P_i,(\mathcal{X},d),k$ and $\Delta$, which finishes the proof.

\end{proof}

\bibliographystyle{alpha}
\bibliography{cluster}

\end{document}